\newtheorem{theorem}{Theorem}[section]
\newtheorem{lemma}[theorem]{Lemma}
\newtheorem{remark}[theorem]{Remark}
\newtheorem*{remarks}{Remarks}
\newcommand\TSC{\rule{0pt}{2.6ex}}       % Top strut
\newcommand\BSC{\rule[-1.2ex]{0pt}{0pt}} % Bottom strut
\newcommand{\R}{{\mathord{\mathbb R}}}
\newcommand{\N}{{\mathord{\mathbb N}}}
\newcommand{\supp}{{\mathop{\rm supp\ }}}
\newcommand\numberthis{\addtocounter{equation}{1}\tag{\theequation}}
\def\nn{\nonumber}
\newcommand{\di}{\,\mathrm{d}}
\newcommand{\abs}[1]{\left|#1\right|}
\newcommand{\norm}[1]{\lVert#1\rVert}
\newcommand{\bvec}[1]{\boldsymbol{#1}}
\newcommand{\inprodtwo}[2]{\left \langle #1 , #2\right \rangle}
\begin{document}
\title[GNS inequalities for convex domains in $\R^d$]{\textbf{Gagliardo-Nirenberg-Sobolev inequalities for convex domains in $\R^d$}}

\author[Benguria]{Rafael~D.~Benguria$^1$}

\author[Vallejos]{Cristobal~Vallejos$^2$}

\author[Van Den Bosch]{Hanne~Van~Den~Bosch$^3$}

\address{$^1$ Instituto  de F\'\i sica, Pontificia Universidad Cat\' olica de Chile,}
\email{{rbenguri@fis.puc.cl}}

\address{$^2$ Facultad de F\'\i sica, Pontificia Universidad Cat\' olica de Chile,}
\email{{civallejos@uc.cl}}

\address{$^3$ Centro de Modelamiento Matem\'atico, CMM, FCFM, Universidad de Chile}
\email{{hvdbosch@cmm.uchile.cl}}

\begin{abstract} 
A special type of Gagliardo--Nirenberg--Sobolev (GNS) inequalities in $\mathbb{R}^d$ has played a key role in several proofs of Lieb--Thirring inequalities. 
Recently, a need for GNS inequalities in convex domains of $\mathbb{R}^d$, in particular for cubes, has arised. The purpose of this manuscript is two--fold. First we 
prove a GNS inequality for convex domains, with explicit constants which depend on the geometry of the domain. Later, using 
the discrete version of Rumin's method, we prove GNS 
inequalities on cubes with improved constants. 
\end{abstract}

\maketitle

\section{Introduction} \label{intro}

In the past sixty years, following the original papers \cite{Gag959,Nir959}, there has been a huge literature on Gagliardo--Nirenberg inequalities. A particular 
case of this type of inequalities is the Gagliardo--Nirenberg--Sobolev inequality (GNS for short),  
\begin{equation}
\int_{\R^d} \abs{\nabla u}^2   \di x  \ge G(d) \left( \int_{\R^d} u^2   \di x \right)^{-2/d}   \int_{\R^d} u^{2(1+2/d)}    \di x ,
\label{GNSRDIntro}
\end{equation} 
which holds for $u \in H^1(\mathbb{R}^d)$, in any dimension $d \ge 1$. The inequality (\ref{GNSRDIntro}) is related to the embedding of the Sobolev space 
$H^1(\mathbb{R}^d)$ in $L_{2+4/d}(\mathbb{R}^d)$. The GNS inequality also arises in the context of Lieb--Thirring inequalities. In fact, if one is interested
in maximizing the absolute value,  to a power $\gamma$ say, of the ground state energy of the Schr\"odinger operator $H=-\Delta + V$, acting on $L^2(\mathbb{R}^d)$
keeping $\int_{\mathbb{R}^d} V_{-}(x)^{\gamma+d/2}   \di x$ fixed, one is immediately lead to consider an inequality like (\ref{GNSRDIntro}) (see \cite{LiTh976} for the original 
discussion, stemming from the Lieb--Thirring Conjecture; see also \cite{DoFeLoPa006}). Recently, an extension of (\ref{GNSRDIntro}) in graphs  has been considered 
\cite{AdSeTi017}. Of course, one of the challenging questions concerning Gagliardo-Nirenberg inequalities in general and the GNS equation (\ref{GNSRDIntro}) in particular 
is to determine sharp constants and to obtain good approximations to them. We will come back to this point at the end of this introduction. 

Several  authors have also  considered extensions of Gagliardo--Nirenberg inequalities to particular domains in $\mathbb{R}^d$ (see, e.g., 
\cite{AcDu003, Zug017}). In this category one can also consider the classical work of Payne and Weinberger \cite{PaWe960} and many results 
stemming from it.  

Recently  Phan--Th\`anh~Nam \cite{Nam018} used microlocal analysis to derive a sharp estimate for the expectation value of the kinetic energy of $N$ non relativistic fermions in terms 
of a functional on their  single particle density. The leading term of Nam's bound is the conjectured Lieb--Thirring bound for the kinetic energy with the semiclassical 
constant in front plus a correction term depending on the gradient of the single particle density. One of the tools used by Nam on his microlocal analysis is 
a GNS inequality of the form (\ref{GNSRDIntro}) but in a unit cube in $d$ dimensions instead of $\mathbb{R}^d$. 
Although in Nam's work no attention is paid to the 
value of the constant in his GNS inequality on cubes,
one needs to have good estimates on the corresponding constants to have an estimate on the gradient correction. 
It is precisely this need which motivates the present results. 
For a bounded domain $\Omega \subset \R^d$, we define 
\begin{equation}
\label{eq:var-prob}
G(\Omega, d) = \inf \frac{\int_{\Omega} \abs{\nabla u}^2 \di x   \left(\int_{\Omega} u^2  \di x  \right)^{2/d}}{\int_{\Omega} {\vert{u-u_\Omega \vert}}^{2+4/d}  \di x },
\end{equation}
where the infimum is taken over functions $u \in W^{1,1}(\Omega)$ and $u_\Omega$ is its average $\abs{\Omega}^{-1} \int_\Omega u$.
Since we are specially interested in the case $\Omega = [0,1]^d$, we will write $G([0,1]^d, d) \equiv G_Q(d)$.
Our main results are summarized in the following theorem.
\begin{theorem}
 \label{thm:meta}
 With the definitions in \eqref{GNSRDIntro} and \eqref{eq:var-prob}, the following holds.
 \begin{enumerate}[{\it i)}]
  \item \label{it:d=1} For $d=1$, \[G_Q(1) = G(1)/4 = \pi^2/16, \] and the infimum is not attained.
  \item \label{it:convex}For all $d \ge 3$ and all convex $\Omega$, we have
  \[
  G(\Omega, d) \ge \left( \frac{ d \, \abs{\Omega}}{ \operatorname{diam}(\Omega)^d \, C_{\rm HLS}(d,d-1,2) }\right)^2
  \]
  \item \label{it:cubes}For cubes and all $d \ge 2$, we have 
  \begin{equation}
   \label{eq:meta-cubes}
   \frac{G(d)}{4} \ge G_Q(d) \ge \frac{\pi^2 d^2}{(d+4)(d+2) N_d^{2/d}},
  \end{equation}
  and if the first inequality is strict, a minimizer exists.
 \end{enumerate}
Here $C_{\rm HLS}$ is the constant in the Hardy-Littlewood-Sobolev inequality (see Section~\ref{sec:Davies} for details), and $N_d$ is given in \eqref{eq:bound_Nd} in Section~\ref{sec:Rumin}.
\end{theorem}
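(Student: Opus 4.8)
We outline the proof of Theorem~\ref{thm:meta} part by part.

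\emph{Part (i).} Replacing $u$ by $u-u_\Omega$ leaves $\int_0^1|u'|^2$ and $\int_0^1|u-u_\Omega|^6$ unchanged while only decreasing $\int_0^1 u^2$, so the infimum defining $G_Q(1)$ may be taken over zero--mean $u\in H^1([0,1])$, where it reads $\inf\{\int_0^1|u'|^2(\int_0^1 u^2)^2/\int_0^1 u^6\}$; likewise $G(1)=\inf\{\|u'\|_2^2\|u\|_2^4/\|u\|_6^6:u\in H^1(\R)\}$. Here $G(1)=\pi^2/4$ is the classical one--dimensional GNS constant, realized up to scaling by the $L^2$--critical soliton $u_0(x)=(\cosh x)^{-1/2}$ (using $\int_\R\operatorname{sech}x\,\di x=\pi$ and $\int_\R\operatorname{sech}^3x\,\di x=\pi/2$). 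For $G_Q(1)\le G(1)/4$ I would concentrate half of this soliton at the endpoint $0$: with $u_\varepsilon(x)=u_0(x/\varepsilon)$ on $[0,1]$, dominated convergence gives $\int_0^1 u_\varepsilon^2\sim\tfrac\varepsilon2\|u_0\|_2^2$, $\int_0^1|u_\varepsilon'|^2\sim\tfrac1{2\varepsilon}\|u_0'\|_2^2$, $\int_0^1 u_\varepsilon^6\sim\tfrac\varepsilon2\|u_0\|_6^6$, and $(u_\varepsilon)_\Omega=O(\varepsilon)=o(\|u_\varepsilon\|_\infty)$, so the quotient tends to $\tfrac14 G(1)$. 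For the matching lower bound, a zero--mean $u\in H^1([0,1])\subset C([0,1])$ vanishes at some $a\in[0,1]$; reflecting $u|_{[0,a]}$ evenly about $0$ and extending by $0$ produces $v\in H^1(\R)$ with $\|v\|_2^2=2\int_0^a u^2$, $\|v'\|_2^2=2\int_0^a|u'|^2$, $\|v\|_6^6=2\int_0^a u^6$, so the whole--line GNS gives $\int_0^a|u'|^2(\int_0^a u^2)^2\ge\tfrac{G(1)}4\int_0^a u^6$; repeating with $u|_{[a,1]}$ reflected about $1$, adding, and using $\int_0^a u^2,\int_a^1 u^2\le\int_0^1 u^2$ yields $G_Q(1)\ge G(1)/4$. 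Non--attainment follows from the equality analysis: a minimizer would produce a nonzero compactly supported optimizer of the whole--line GNS inequality, which cannot exist (that optimizer solves a second--order ODE and hence cannot vanish on an interval).

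\emph{Part (ii).} The idea is to combine the classical potential representation on convex domains with the sharp Hardy--Littlewood--Sobolev inequality. For $u\in W^{1,1}(\Omega)$ with $\Omega$ convex one has (Section~\ref{sec:Davies})
\[
|u(x)-u_\Omega|\le\frac{\operatorname{diam}(\Omega)^d}{d\,|\Omega|}\int_\Omega\frac{|\nabla u(y)|}{|x-y|^{d-1}}\di y .
\]
Extending $|\nabla u|$ by zero and applying HLS with the Riesz kernel $|x|^{-(d-1)}$ and right exponent $2$ yields $\|u-u_\Omega\|_{L^{2^*}(\Omega)}\le\tfrac{\operatorname{diam}(\Omega)^d\,C_{\rm HLS}(d,d-1,2)}{d\,|\Omega|}\,\|\nabla u\|_{L^2(\Omega)}$ with $2^*=\tfrac{2d}{d-2}$; the hypothesis $d\ge3$ is precisely what makes this exponent subcritical (the estimate degenerates as $d\to2$). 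Since $2<2+\tfrac4d<2^*$ for $d\ge3$, interpolating $L^{2+4/d}$ between $L^2$ and $L^{2^*}$ with exponent $\theta=\tfrac d{d+2}$ on the $L^{2^*}$ factor, using $\|u-u_\Omega\|_2\le\|u\|_2$, and raising to the power $2+\tfrac4d=\tfrac{2(d+2)}d$ (which collapses the exponents to exactly $4/d$ on $\|u\|_2$ and $2$ on $\|\nabla u\|_2$) gives
\[
\|u-u_\Omega\|_{2+4/d}^{2+4/d}\le\|u\|_2^{4/d}\left(\frac{\operatorname{diam}(\Omega)^d\,C_{\rm HLS}(d,d-1,2)}{d\,|\Omega|}\right)^{2}\|\nabla u\|_2^{2},
\]
which is the claimed lower bound for $G(\Omega,d)$. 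The only delicate points are reproducing the sharp constant in the potential estimate and checking that the HLS constant survives the zero--extension.

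\emph{Part (iii).} The bound $G_Q(d)\le G(d)/4$ is the higher--dimensional version of the construction in part (i): take $Q$ an optimizer for $G(d)$ on $\R^d$ (radial, exponentially decaying) and $u_\varepsilon(x)=Q(x/\varepsilon)$ on $[0,1]^d$ with the corner at the origin; as $\varepsilon\to0$ all three integrals localize to the positive octant and each picks up a factor $2^{-d}$, and since $\bigl(\int u^2\bigr)^{2/d}$ turns $2^{-d}$ into $2^{-2}$ while numerator and denominator carry one $2^{-d}$ each, the quotient tends to $2^{-2}G(d)=G(d)/4$ (the vertex, whose tangent cone is an octant, is the worst face; the relative interior of a $k$--dimensional face gives the larger limit $2^{-2(d-k)/d}G(d)$). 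For the lower bound I would use the discrete version of Rumin's method: for a zero--mean $u$, split $u=P_\lambda u+(I-P_\lambda)u$ at a spectral parameter $\lambda>0$, where $P_\lambda$ is the projection onto the Neumann eigenspaces of eigenvalue $\le\lambda$; control the superlevel sets of $P_\lambda u$ via the dimension count $\dim\operatorname{Ran}P_\lambda\le N_d\lambda^{d/2}$ (essentially the Weyl bound \eqref{eq:bound_Nd} for the unit cube) together with $\|\phi_k\|_\infty\le2^{d/2}$ and Cauchy--Schwarz, and those of $(I-P_\lambda)u$ via $\|(I-P_\lambda)u\|_2^2\le\lambda^{-1}\|\nabla u\|_2^2$ and Chebyshev; feeding both into $\int_{[0,1]^d}|u|^{p}=p\int_0^\infty t^{p-1}|\{|u|>t\}|\,\di t$ with $p=2+\tfrac4d$ and optimizing $\lambda$ at each level produces the stated constant $\tfrac{\pi^2 d^2}{(d+4)(d+2)N_d^{2/d}}$, the $\pi^2$ coming from the smallest nonzero Neumann eigenvalue and $N_d^{2/d}$ from the Weyl coefficient.

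\emph{Attainment, and the main obstacle.} If $G(d)/4>G_Q(d)$ strictly, a minimizer exists by a concentration--compactness argument on the bounded domain $[0,1]^d$: a minimizing sequence either converges to a minimizer or concentrates at a point $x_0\in[0,1]^d$, but even reflection across the bounding hyperplanes of the tangent cone at $x_0$ forces the limiting quotient to be $\ge G(d)$ at an interior point and $\ge G(d)/4$ at a boundary point (worst at a vertex), in either case $>G_Q(d)$; hence concentration is excluded and the minimizing sequence is precompact. The genuinely hard step is the lower bound in part (iii): making the discrete Rumin scheme deliver the clean constant rests on choosing $\lambda$ optimally at each level and, above all, on a sharp enough count $N_d$ of the Neumann eigenvalues of the unit cube --- this is where essentially all of the dimensional dependence resides, and where the borderline dimension $d=2$ (excluded from part (ii)) must still be handled.
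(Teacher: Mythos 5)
Your part~(i) is correct but takes a genuinely different route. You split the interval at a zero of the zero-mean function, reflect each piece evenly, and apply the whole-line GNS to the reflections; the paper instead splits $u$ into positive and negative parts, takes their nonincreasing rearrangements, and reflects those. Both give the factor $1/4$, and your split-at-a-zero version is arguably simpler since it avoids rearrangements altogether. Your equality analysis is also sound (though compressed): a minimizer would force equality in the whole-line GNS for each compactly supported reflection, and the unique optimizer has full support, so both pieces must vanish. Part~(ii) follows the paper's argument essentially verbatim (Davies' potential estimate, zero-extension, HLS with $|x|^{-(d-1)}$, H\"older interpolation), and the upper bound in part~(iii) (concentrate a rescaled optimizer at a corner, pick up $2^{-d}\cdot(2^{-d})^{2/d}/2^{-d}=1/4$) is also the same.

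The lower bound in part~(iii) is where your sketch diverges from the paper in a way that is not clearly correct. You describe a Lieb/CLR-style argument: split $u = P_\lambda u + (I-P_\lambda)u$ at a fixed $\lambda$, bound level sets of the low part by $\norm{P_\lambda u}_\infty$ and those of the high part by Chebyshev, then feed into the layer-cake representation and ``optimize $\lambda$ at each level.'' That route does not obviously reproduce the constant $\pi^2 d^2/((d+4)(d+2)N_d^{2/d})$: Chebyshev loses a factor, and the naive $\lambda$-optimization gives a logarithmically divergent $t$-integral that must be regularized by a crossover, which further degrades the constant. The paper's method (following Lundholm's Theorem~4.26) avoids all of this by writing $\int_{Q_d}\abs{\nabla v}^2 = \int_{Q_d}\int_0^\infty \abs{P_{\ge E}v(x)}^2\di E\di x$, using the pointwise bound $\abs{P_{\ge E}v}\ge\bigl[\abs{v}-\abs{P_{<E}v}\bigr]_+$ together with $\abs{P_{<E}v(x)}\le\norm{v}_2\sqrt{N_d}\,(E/\pi^2)^{d/4}$ from Cauchy--Schwarz, and evaluating the $E$-integral exactly; no Chebyshev and no per-level optimization is needed.

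The dichotomy in part~(iii) has a genuine gap as you have written it. The paper truncates the minimizing sequence at $m_n=\norm{v_n}_{2+4/d}^{1/2}$ to produce a second minimizing sequence $u_n$ with $\abs{\supp u_n}\to 0$, then applies the rearrangement Lemma~\ref{lem:rearrangements} (valid only for functions with small support, by the Morgan--Johnson isoperimetric theorem on the cube/torus) to $u_{n,\pm}$ and reflects the radial-in-a-corner functions to $\R^d$. Your blow-up/concentration-compactness sketch omits this crucial ingredient: merely saying the minimizing sequence ``concentrates at a point $x_0$'' and reflecting across the tangent cone does not give a lower bound $\ge G(d)/4$ on the quotient, because the concentrating profile is not a priori spherically symmetric about $x_0$. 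To make your route rigorous you would have to identify the blow-up limit as a minimizer of the tangent-cone problem and show that such a minimizer can be symmetrized --- which is exactly what the rearrangement lemma accomplishes, and the reason the paper needs the (nontrivial) fact that small-volume perimeter minimizers in the cube are corner balls.
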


The remainder of the manuscript contains the proof of the results stated above, divided in several separate statements. Section~\ref{sec:Davies} contains the proof of point~\ref{it:convex}), based on a result of Brian Davies \cite{Dav990} that holds for general convex domains. The next two sections deal with the special case of cubes. 
The lower bound of point~\ref{it:cubes}), is established in Section~\ref{sec:Rumin}, using a method originally due to Rumin \cite{Ru011}. This bound gives better numerical constants (see table \ref{tbl:cubes}) and has a simpler proof than the more general statement of point~\ref{it:convex}). In addition, it also holds for $d=2$. We have not been able to obtain bounds for other domains in $\R^2$.
The final Section~\ref{sec:GNS} contains the proofs of the one-dimensional case \ref{it:d=1}), of the upper bound in point \ref{it:cubes}) and of the results on (non)-existence. These last results rely on a rearrangement lemma for cubes that will be proven in the appendix.

\bigskip

Let us finish this introduction with a short overview of known bounds for the Gagliardo-Nirenberg-Sobolev constants in $\R^d$.
It is well known that the variational problem in \eqref{GNSRDIntro} has a unique minimizer (up to translation, scaling and multiplication), 
which is radially symmetric, decreasing, and  can be taken positive. The values of the constants $G(d)$ are only known for $d =1$ (see, e.g., \cite{Agu008,Lun017,Nas989} and 
references therein), where the value is $G(1)=\pi^2/4$ (for an alternative proof of this fact see \cite{BeLo004}).

The inequality \eqref{GNSRDIntro} is a particular case of a Gagliardo-Nirenberg type inequality that characterizes the embedding 
of $H^1(\mathbb{R}^d)$ in $L_{\rho+2}(\mathbb{R}^d)$, of the form
\begin{equation}
\|u\|_{\rho+2} \le k(\rho,d)  \|\nabla u\|_2^{\alpha} \|u\|_2^{1-\alpha}. 
\label{Besov1}
\end{equation}
Here, 
\begin{equation}
\alpha=\frac{d}{2} \frac{\rho}{\rho+2}.
\label{Besov2}
\end{equation}
The inequality (\ref{Besov1})  holds for any $\rho \in (0,\rho_0)$, where $\rho_0 =4/(d-2)$, if $d \ge 3$, and  $\rho=\infty$ if $d=1,2$
(see, e.g., \cite{BeIlNi975,Nas989}, and references therein). The inequality (\ref{GNSRDIntro}) corresponds to setting $\rho=4/d$ in 
(\ref{Besov1}). 

Except for particular cases, the optimal constant $k(\rho,d)$ for (\ref{Besov1}) is not known.  The best 
estimates to date for $k(\rho,d)$ are the ones obtained by Nasibov in \cite{Nas989}, namely, 
\begin{equation}
k(\rho,d)  \le k_N(\rho,d) \equiv \frac{1}{\chi} \left( \frac{|\mathbb{S}^{d-1}| \, B(\frac{d}{2},\frac{d(1-\alpha)}{2 \alpha})}{2}\right)^{\alpha/d} 
k_{BB}\left(\frac{\rho+2}{\rho+1} \right).
\label{Besov3}
\end{equation}
In (\ref{Besov3}), 
\begin{equation}
\chi = \sqrt{\alpha^{\alpha} \, (1-\alpha)^{1-\alpha}}, 
\label{Besov4}
\end{equation}
and $B(x,y)$ is the Euler Beta function, i.e., $B(x,y) = \Gamma(x) \Gamma(y)/\Gamma(x+y)$. 
Moreover, 
\begin{equation}
k_{BB} (p) =\left( \left(\frac{p}{2\pi}\right)^{1/p}\big/\left(\frac{p'}{2\pi}\right)^{1/p'} \right)^{d/2},
\label{Besov5}
\end{equation}
for $1<p<\infty$ and $1/p+1/p'=1$, is the optimal constant for the Hausdorff--Young inequality, as it was proven by Babenko \cite{Bab961} and Beckner
\cite{Bec975}. It follows from the previous discussion that (\ref{GNSRDIntro}) holds for any $u \in H^1(\R^d)$, for $d \ge 1$ and 
\begin{equation} 
G(d) \ge G_N(d) \equiv \bigl( k_N(4/d,d) \bigr)^{-2/\alpha}.
\label{Besov6}
\end{equation}

More recently, (\ref{GNSRDIntro}) has been also proven using a projection of the Fourier transform of $u$ into high and low energy components, a method inspired on 
Rumin's techniques \cite{Ru011}: see, e,g., \cite{Fra014, LuSo013, Sol011}. Using these techniques one can prove that
\begin{equation}
G(d)  \ge G'(d)  \equiv \frac{(2\pi)^2 d^{2+2/d} \abs{\mathbb{S}^{d-1}}^{-2/d}}{(d+2)(d+4)}.
\label{GNSconstant}
\end{equation}
A detailed proof and further comments and references can be found as Theorem 4.14 in the recent lecture notes  \cite{Lun017}.

As pointed out in \cite{Lun017} the optimal constant in (\ref{GNSRDIntro}) satisfies $G(1)=\pi^2/4$, $G(2) = S_{2,4}$ and $G(d)\ge S_d$ for all $d \ge 3$. 
Here $S_d=d(d-2)|\mathbb{S}^d|^{2/d}/4$ is the optimal constant in Sobolev's inequality
$$
\int_{\mathbb{R}^d} (\nabla u)^2   \di x \ge S_d {\|u\|^2_{2d/(d-2)}},
$$
which holds for all $u \in H^1(\mathbb{R}^d)$, and $d\ge 3$, while $S_{2,4}$ is the optimal constant of the inequality
$$
\int_{\mathbb{R}^2} (\nabla u)^2   \di x \ge S_{2,4} \|u\|_2^{-2} \|u\|_4^4. 
$$
The value of $S_{2,4}$ is not known, but there are 
well known lower bounds,
(see, e.g.,  \cite{LiLo001} Theorem 8.5).

Although the lower estimates on $G(d)$ obtained using Rumin's techniques are worse than the ones obtained by Nasibov, it is worth introducing them since their proof is simpler. Moreover, we will use the discrete version of Rumin's method to obtain lower bound on $G_Q(d)$ in Section~\ref{sec:Rumin}.
We have summarized the present situation concerning the numerical values of the estimates on $G(d)$ in Table $1$.

\begin{table}[h!]
\centering
\begin{tabular}{l | c| c | r}
$d$ & $G_N$(d) & $G'(d)$ & $G(d)$  \BSC\\

\hline

$1$ & $2.2705$ &    $0.6580$  &   $2.4674$ \TSC \\
$2$ & $5.3014$ &    $2.0944$  &    $ 5.850$ \hphantom{0} \\
$3$ & $8.6427$ &    $3.9067$  & $9.578$ \hphantom{0}\\
$4$ & $12.1605$ &   $5.9238$  &   $ 13.489$ \hphantom{0}\\ 
$5$ & $15.7941$ &   $8.0619$  &  $ 17.483$  \hphantom{\,} \BSC \\

\hline
\end{tabular}
\caption{The values of the second  column in this table are the bounds of Nasibov given by equations (\ref{Besov3}), (\ref{Besov6}) above, 
the values on the third column are the bounds obtained using Rumin's techniques (equation (\ref{GNSconstant})).
The fourth column contains the known exact value for $d=1$, i.e., $\pi^2/4$, whereas the values for $d\ge 2$ are obtained by us through numerical integration of 
the Euler equations associated with (\ref{GNSRDIntro}).} %9.603 A mi me da ese valor para d=3
\label{tbl:GNS}
\end{table}

\section{Estimates for general convex domains in $\R^d$, $d \ge 3$}
\label{sec:Davies}
In this section we prove a Gagliardo-Nirenberg-Sobolev inequality on convex domains of $\R^d$.

\begin{theorem} \label{thm:GNSconvex}
If $\Omega$ is a bounded, convex set in $\mathbb{R}^d$, $d\ge 3$, and $u \in W^{1,1}(\Omega)$, we have 
$$
  \int_{\Omega} |\nabla u|^2  \di x   \left(\int_{\Omega} u^2 \di x\right)^{2/d}
  \ge C_{1}(\Omega,d) \int_{\Omega} |u-u_{\Omega}|^{2+4/d}  \di x.
$$
Here
\[
 C_1(\Omega, d) =(C_{\rm D}(\Omega, d) \,  C_{\rm HLS}(d,d-1,2) )^{-2},
\] 
where $C_{\rm D}(\Omega, d)=\operatorname{diam}(\Omega)^d/(d|\Omega|)$, is the geometric constant of Lemma \ref{average} below; 
$C_{\rm HLS}$ is the Hardy-Littlewood-Sobolev constant  (see, e.g., \cite{Lie983} and \cite{LiLo001} Theorem 4.3).
\end{theorem}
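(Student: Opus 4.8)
The plan is to obtain the inequality in two stages. First, using the pointwise estimate of Davies (Lemma~\ref{average}) together with the sharp Hardy--Littlewood--Sobolev (HLS) inequality, I would derive the Sobolev embedding $W^{1,2}(\Omega)\hookrightarrow L^{2^{*}}(\Omega)$ with $2^{*}=2d/(d-2)$ and an explicit constant; second, I would interpolate between $L^{2}$ and $L^{2^{*}}$ to reach the exponent $2+4/d$, keeping track of all constants so that they combine into $(C_{\rm D}\,C_{\rm HLS})^{-2}$.

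Before anything, I would reduce to the mean-zero case: setting $v:=u-u_{\Omega}$ leaves $\nabla u$ and $u-u_{\Omega}$ unchanged, gives $v_{\Omega}=0$, and satisfies $\int_{\Omega}u^{2}\di x=\int_{\Omega}v^{2}\di x+u_{\Omega}^{2}\abs{\Omega}\ge\int_{\Omega}v^{2}\di x$, so it suffices to prove the inequality for $v$. If $\int_{\Omega}\abs{\nabla v}^{2}\di x=+\infty$ there is nothing to prove, so assume $\nabla v\in L^{2}(\Omega)$; since $\Omega$ is bounded this forces $\nabla v\in L^{1}(\Omega)$, while $v\in L^{1}(\Omega)$ because $v\in W^{1,1}(\Omega)$. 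Lemma~\ref{average} then gives, for a.e.\ $x\in\Omega$,
\[
\abs{v(x)}\ \le\ C_{\rm D}(\Omega,d)\int_{\Omega}\frac{\abs{\nabla v(y)}}{\abs{x-y}^{d-1}}\di y,\qquad C_{\rm D}(\Omega,d)=\frac{\operatorname{diam}(\Omega)^{d}}{d\abs{\Omega}}.
\]

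For the first stage, I would extend $\abs{\nabla v}$ by zero outside $\Omega$ and apply the HLS inequality with kernel exponent $\lambda=d-1$ (admissible since $0<d-1<d$ for $d\ge3$) to the convolution above: its $L^{2^{*}}(\R^{d})$ norm is bounded by $C_{\rm HLS}(d,d-1,2)\,\norm{\nabla v}_{L^{2}(\Omega)}$, the relevant HLS exponent pair being $\bigl((2^{*})',2\bigr)$, which satisfies $\tfrac{1}{(2^{*})'}+\tfrac12+\tfrac{d-1}{d}=2$. Restricting $x$ to $\Omega$ only decreases the norm, so $\norm{v}_{L^{2^{*}}(\Omega)}\le C_{\rm D}(\Omega,d)\,C_{\rm HLS}(d,d-1,2)\,\norm{\nabla v}_{L^{2}(\Omega)}$. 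For the second stage, note $2<2+4/d<2^{*}$ for $d\ge3$, so log-convexity of the $L^{p}$ norms gives $\norm{v}_{2+4/d}\le\norm{v}_{2}^{1-\theta}\norm{v}_{2^{*}}^{\theta}$ with $\theta=d/(d+2)$ (indeed $\tfrac{1-\theta}{2}+\tfrac{\theta}{2^{*}}=\tfrac{d}{2(d+2)}=\tfrac{1}{2+4/d}$). Inserting the previous bound, raising to the power $2+4/d$, and using $\theta(2+4/d)=2$ and $(1-\theta)(2+4/d)=4/d$, I obtain
\[
\int_{\Omega}\abs{v}^{2+4/d}\di x\ \le\ \bigl(C_{\rm D}(\Omega,d)\,C_{\rm HLS}(d,d-1,2)\bigr)^{2}\,\norm{\nabla v}_{2}^{2}\,\norm{v}_{2}^{4/d},
\]
which, after undoing the reduction to $v$, is exactly the claimed inequality with $C_{1}(\Omega,d)=(C_{\rm D}\,C_{\rm HLS})^{-2}$.

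I do not expect the above to be the main obstacle: once Lemma~\ref{average} is granted, the argument is essentially exponent bookkeeping plus the standard ``Sobolev $+$ interpolation'' passage to a Gagliardo--Nirenberg inequality. The genuine content sits in Lemma~\ref{average} itself, Davies' pointwise representation with the explicit geometric constant $\operatorname{diam}(\Omega)^{d}/(d\abs{\Omega})$: this is where convexity enters, through the fact that the segment from $x$ to any other point of $\Omega$ stays inside $\Omega$, so that $v(x)-v(y)$ can be written as a line integral of $\nabla v$ and then averaged over $y\in\Omega$ and estimated radially. Two minor technical points to be careful with are that HLS is applied to $\abs{\nabla v}\,\mathbf{1}_{\Omega}$ on all of $\R^{d}$ and then restricted back to $\Omega$ (which can only lower the constant), and that the integrability hypotheses ($\nabla v,v\in L^{2}$) needed to run HLS and the interpolation follow a posteriori from $v\in W^{1,1}(\Omega)$, boundedness of $\Omega$, and the Sobolev bound just derived.
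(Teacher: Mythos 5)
Your proposal is correct and follows essentially the same route as the paper: Davies' pointwise bound, HLS with exponent pair making the Riesz potential land in $L^{2d/(d-2)}$, and interpolation between $L^{2}$ and $L^{2d/(d-2)}$ to hit $L^{2+4/d}$, together with $\norm{u-u_{\Omega}}_{2}\le\norm{u}_{2}$. The only difference is cosmetic: you first package Davies$+$HLS as a Sobolev embedding and then interpolate, whereas the paper applies H\"older first and then invokes Davies$+$HLS to control the resulting $L^{2d/(d-2)}$ factor — but the H\"older step in the paper is precisely your log-convexity interpolation with the same $\theta=d/(d+2)$, so the two are arithmetically identical.
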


One of the ingredients in our proof of Theorem \ref{thm:GNSconvex} is the following result of Brian Davies whose proof we give here for completeness. This lemma has been used recently in similar inequalities in \cite{MiTaSeOi017}. Numerical values of the constants $C_1(\Omega,d)$ when $\Omega$ is a cube are given in Table \ref{tbl:cubes}. 

\begin{lemma} [E.~B.~Davies, \cite{Dav990}, Lemma 1.7.3] \label{average}
Let $\Omega$ be a convex, bounded domain in $\mathbb{R}^d$, with volume $|\Omega|$ and diameter $\operatorname{diam} ( \Omega)$. If $f\in W^{1,1}(\Omega)$, then, 
\begin{equation}
|f(x) - f_{\Omega}| \le \frac{\operatorname{diam} ( \Omega)^d}
{d|\Omega|} \left(h_{d/(d-1)}*|\nabla f| \right)(x),
\label{eq:Davies}
\end{equation}
almost everywhere. 
Here, 
% $$
% f_{\Omega} = \frac{1}{|\Omega|} \int_{\Omega} f(y)   \di y, 
% $$
$$
h_s(x) = |x|^{-d/s},
$$
and 
$$
(f*g)(x) = \int f(x-y) g(y)   \di y.
$$
\end{lemma}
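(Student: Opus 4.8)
The plan is to establish the pointwise bound first for $f \in C^1(\overline{\Omega})$ and then extend it to all of $W^{1,1}(\Omega)$ by density, the convexity of $\Omega$ being used precisely to guarantee that the segment between two points of $\Omega$ stays inside $\Omega$.

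\emph{Smooth case.} Fix $x \in \Omega$. For any $y \in \Omega$ the segment $[x,y]$ lies in $\Omega$, so with $\rho = \abs{y-x}$ and $\omega = (y-x)/\rho$ the fundamental theorem of calculus gives $f(y)-f(x) = \int_0^\rho \nabla f(x+r\omega)\cdot\omega \, \di r$. I would average this over $y \in \Omega$, pass to polar coordinates $y = x + \rho\omega$ centered at $x$ (so $\di y = \rho^{d-1}\di\rho\,\di\omega$), and write $\delta(\omega)$ for the length of the chord of $\Omega$ issuing from $x$ in direction $\omega$, which yields
\[
  f(x) - f_\Omega = -\frac{1}{\abs{\Omega}} \int_{\mathbb{S}^{d-1}} \int_0^{\delta(\omega)} \left( \int_0^{\rho} \nabla f(x+r\omega)\cdot\omega \, \di r \right) \rho^{d-1} \, \di\rho \, \di\omega .
\]
For fixed $\omega$ I would apply Fubini on the triangle $0 \le r \le \rho \le \delta(\omega)$, so that the $\rho$-integral produces the weight $\int_r^{\delta(\omega)} \rho^{d-1} \di\rho = (\delta(\omega)^d - r^d)/d \le \operatorname{diam}(\Omega)^d/d$. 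Taking absolute values and using $\abs{\nabla f \cdot \omega} \le \abs{\nabla f}$ gives
\[
  \abs{f(x) - f_\Omega} \le \frac{\operatorname{diam}(\Omega)^d}{d\abs{\Omega}} \int_{\mathbb{S}^{d-1}} \int_0^{\delta(\omega)} \abs{\nabla f}(x+r\omega) \, \di r \, \di\omega .
\]
Extending $\abs{\nabla f}$ by zero outside $\Omega$ turns the inner integral into $\int_0^\infty$, and reverting to Cartesian coordinates $z = r\omega$ ($\di z = r^{d-1}\di r\,\di\omega$) turns the right-hand side into $\frac{\operatorname{diam}(\Omega)^d}{d\abs\Omega} \int_{\R^d} \abs{z}^{-(d-1)} \abs{\nabla f}(x+z)\,\di z$; since $h_{d/(d-1)}(z) = \abs{z}^{-d/(d/(d-1))}=\abs{z}^{-(d-1)}$ is even, this is exactly $\frac{\operatorname{diam}(\Omega)^d}{d\abs\Omega}\,(h_{d/(d-1)} * \abs{\nabla f})(x)$, which is the asserted inequality for $C^1$ functions.

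\emph{Passage to $W^{1,1}$.} Since a bounded convex set is a Lipschitz domain, $C^1(\overline{\Omega})$ is dense in $W^{1,1}(\Omega)$; choose $f_n \to f$ in $W^{1,1}(\Omega)$. The kernel $h_{d/(d-1)}(z) = \abs{z}^{-(d-1)}$ is locally integrable on $\R^d$ because $d-1 < d$, so by Young's inequality (using boundedness of $\Omega$) one has $h_{d/(d-1)} * \abs{\nabla f_n} \to h_{d/(d-1)}*\abs{\nabla f}$ in $L^1(\Omega)$, while also $f_n \to f$ in $L^1(\Omega)$ and $(f_n)_\Omega \to f_\Omega$. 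Passing to a subsequence along which both sides converge almost everywhere then yields the inequality for a.e.\ $x \in \Omega$. (Alternatively one can skip the approximation and run the segment computation directly on the ACL representative of $f$, which is absolutely continuous along almost every segment in $\Omega$; this works but is measure-theoretically heavier.)

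I expect the only real subtlety to lie in this last step — arranging the limit so that it preserves the \emph{almost everywhere} pointwise statement — together with the routine bookkeeping needed to justify Fubini and the polar/Cartesian changes of variables in the smooth case. The geometric input is light: convexity enters only to keep segments inside $\Omega$ and to replace $\delta(\omega)$ by $\operatorname{diam}(\Omega)$.
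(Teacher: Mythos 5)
Your argument is correct and follows essentially the same route as the paper's: the fundamental theorem of calculus along segments (using convexity), averaging over $y \in \Omega$ in polar coordinates centered at $x$, Fubini to swap the $r$ and $\rho$ integrals, bounding the resulting weight $(\delta(\omega)^d - r^d)/d$ by $\operatorname{diam}(\Omega)^d/d$, and reverting to Cartesian coordinates to recognize the convolution. You are somewhat more careful in spelling out the passage from smooth functions to $W^{1,1}$ (the paper simply invokes density), but the substance of the proof is identical.
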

\begin{remarks} 
\item{\rm i)} With a slight abuse of notation, here $\nabla f(x)$  is the extension by zero of $\nabla f$ in $\Omega$, to $\R^d$. 

\item{\rm ii)} The averaging method used in this proof is a standard tool in partial differential equations. This method  goes back to the proof of the \emph{Huygens principle} for the solution to the wave equation.
%ref Huygens e.g. Fritz John
\end{remarks}

\begin{proof}
Since $C^{\infty} \cap W^{1,1}(\Omega)$ is dense in $W^{1,1}(\Omega)$ it is sufficient to consider smooth functions.
If $x, y \in \Omega$, using the Fundamental Theorem of Calculus and the convexity of the domain we have
\begin{equation}
f(x) - f(y) = - \int_0^{\rho}\frac{\partial}{\partial r} f(x+ r \omega)   \di r 
\label{eq:1}
\end{equation}
where $\omega=(y-x)/\abs{y-x}$, and $\rho=\abs{y-x}$.
For fixed $x \in \Omega$ we average over $y \in \Omega$ and obtain, 
\begin{equation}
|f(x) - f_{\Omega}| 
\le \frac{1}{|\Omega|} \int_{|\omega|=1} d\omega\int_0^{\rho_{max}(\omega)}\rho^{d-1}   \di \rho
\int_0^\rho \abs{\nabla f(x+r\omega)}   \di r.
\label{eq:2}
\end{equation}
In \eqref{eq:2}, $\rho_{max}(\omega)$ is the distance of $x$ to the boundary of $\Omega$, in the direction $\omega$. 
Interchanging the order of integration between $r$ and $\rho$  in the right side of \eqref{eq:2} and performing the integral in $\rho$, we get
$$
\abs{f(x) - f_{\Omega}} 
\le \frac{1}{|\Omega|} \int_{|\omega|=1}    \di \omega \int_0^{\rho_{max}(\omega)}
\abs{\nabla f(x+r\omega)} 
\frac{1}{d}(\rho_{max}(\omega)^d - r^d)   \di r.
$$
Now, using that  $\rho_{max}(\omega)^d - r^d\le \beta^d$, where $\beta$ is the diameter of $\Omega$,  we get
\begin{eqnarray}
|f(x) - f_{\Omega}| \le \frac{\beta^d}{d |\Omega|} \int_{|\omega|=1} \di \omega\int_0^{\rho_{max}(\omega)}   
|\nabla f(x+r\omega) | \di r
=\frac{\beta^d}{d |\Omega|} \int_{\Omega}
r^{-(d-1)} |\nabla f(x+u) |  \di u.
\nonumber
\end{eqnarray}
Finally, using the definition of $h$, we get the desired inequality \eqref{eq:Davies}
\end{proof}

Now, we are ready to give the proof of Theorem~\ref{thm:GNSconvex}.

\begin{proof} [Proof of Theorem \ref{thm:GNSconvex}]
Let $d \ge 3$. Using H\"older's inequality we have
\begin{equation*}
\int_{\Omega} |u-u_{\Omega}|^{2+4/d}   \di x
\le \left(\int_{\Omega} |u-u_{\Omega}|^2 dx\right)^{2/d}
\left(\int_{\Omega} |u-u_{\Omega}|^{2d/(d-2)}   \di x\right)^{(d-2)/d}.
\label{eq:conv1}
\end{equation*}
Using Lemma \ref{average}
\begin{equation}
\int_{\Omega} |u-u_{\Omega}|^{2d/(d-2)}   \di x
\le
%  C_D^{2d/(d-2)} 
% \int_{\Omega}
% \left[|x|^{-(d-1)}*|\nabla u| \right]^{2d/(d-2)} dx =
C_{\rm D}(\Omega, d)^{2d/(d-2)}
\left\||x|^{-(d-1)}*|\nabla u|\right\|_{2d/(d-2)}^
{2d/(d-2)}.
\label{eq:conv2}
\end{equation}
In $\mathbb{R}^d$, $|x|^{-d+1} \in L_{q,w} (\mathbb{R}^d)$ for $q=d/(d-1)$, and 
$$
\||x|^{-d+1} \|_{q,w} \equiv \sup_A \abs{A}^{-1/d} \int_A \abs{x}^{-d +1} \di x  = d \, \omega_d^{(d-1)/d},
$$
the volume of the unit ball in dimension $d$.
Then, using the Hardy-Littlewood-Sobolev inequality with 
$r=2d/(d-2)$, $p=2$ and $q=d/(d-1)$ following the notations of \cite[Theorem 4.3]{LiLo001}, 
we have 
\begin{align*}
 \||x|^{-(d-1)}*|\nabla u|\|_{2d/(d-2)} &
\le 
\dfrac{1}{d}  \, \omega_d^{-(d-1)/d} \, C_{\rm HLS}(d,d-1,2) \, \|\nabla u\|_2 
\, \||x|^{-(d-1)}\|_{q,w}  \\
&\le  C_{\rm HLS}(d,d-1,2) \, \|\nabla u\|_2 .
\end{align*}
Inserting this bound in \eqref{eq:conv2}, we obtain 
\[
\int_{\Omega} |u-u_{\Omega}|^{2d/(d-2)}  \di x
\le (C_{\rm D}(\Omega, d) \, C_{\rm HLS}(d,d-1,2) )^2 \left(\int_{\Omega} u^2  \di x \right)^{2/d} \int_{\Omega} |\nabla u|^2 \di x \qedhere
\]
\end{proof}

% \noindent
% For the cube $Q$ of volume $1$, the diameter is $\beta = \sqrt{d}$, and the $C_P$ constant as a function of the dimension $d$, with  $d \ge 3$ is, 
% 
% \bigskip
% \noindent
% 
% $$C_3= 5.44164\dots$$ ($1/C_3=0.183768$).

\section{A \textsc{GNS} inequality for cubes}
\label{sec:Rumin}
In this section, we use the explicit eigenfunctions of the Neumann Laplacian on the cube to obtain an improved inequality in this case.

\begin{theorem} \label{thm:rumin}
For all $d \ge 1$, we have the bound
\[
G_Q(d) \ge \frac{\pi^2 d^2}{(d+4)(d+2) N_d^{2/d}} \equiv G_2 (Q_d, d),
\]
where the constant $N_d$ is related to a counting problem (see \eqref{eq:def_Nd} below) and satisfies
\begin{equation} \label{eq:bound_Nd}
N_d \le \sum_{\ell=1}^{d} \binom{d}{\ell} \frac{\omega_{\ell}}{ \ell^{(d-\ell)/2}}.
\end{equation}
where $\omega_\ell$ is the volume of the $\ell$-dimensional unit ball.
\end{theorem}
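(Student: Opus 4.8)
The plan is to adapt the discrete version of Rumin's method — the high/low energy splitting used for \eqref{GNSconstant} in $\R^d$ — to the cube $Q_d = [0,1]^d$, replacing the Fourier transform by expansion in the Neumann eigenfunctions of $-\Delta$ on $Q_d$. These eigenfunctions are the explicit products $\varphi_k(x) = \prod_{j=1}^d \sqrt{2}^{\,[k_j\neq 0]}\cos(\pi k_j x_j)$ indexed by $k \in \N_0^d$, with eigenvalues $\pi^2 |k|^2$; crucially $\varphi_0 \equiv 1$, so subtracting the mean $u_\Omega$ corresponds exactly to discarding the $k=0$ mode. So write $v = u - u_\Omega = \sum_{k \neq 0} \hat v_k \varphi_k$, with $\|\nabla v\|_2^2 = \pi^2 \sum_{k\neq 0} |k|^2 |\hat v_k|^2$ and $\|v\|_2^2 = \sum_{k\neq 0}|\hat v_k|^2$. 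Following Rumin, for each threshold $E>0$ split $v = v_{<E} + v_{\geq E}$ according to whether $\pi^2|k|^2 < E$; estimate $\|v_{\geq E}\|_2^2 \le E^{-1}\|\nabla v\|_2^2$ and $\|v_{<E}\|_\infty \le \|v_{<E}\|_2 \cdot (\text{something})$ — but the cleaner route is to bound $|v_{<E}(x)|$ pointwise by Cauchy--Schwarz: $|v_{<E}(x)| \le \big(\sum_{0<\pi^2|k|^2 < E} \|\varphi_k\|_\infty^2\big)^{1/2}\|v\|_2$, and since $\|\varphi_k\|_\infty^2 = 2^{\#\{j: k_j\neq 0\}} \le 2^d$, one gets $|v_{<E}(x)| \le \big(\text{counting factor}\big)^{1/2}\|v\|_2$. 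This is precisely where $N_d$ enters: the number of lattice points $k\in\N_0^d\setminus\{0\}$ with $|k|^2 < E/\pi^2$, weighted by $2^{\#\{j:k_j\neq 0\}}$, is what must be controlled, and after the standard Rumin optimization over $E$ the $L^{2+4/d}$-norm of $v$ is bounded in terms of $\|v\|_2$ and $\|\nabla v\|_2$ with a constant governed by the continuum limit of this weighted count, giving rise to the quantity $N_d$ defined in \eqref{eq:def_Nd}.

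Concretely, I expect the computation to proceed as in the $\R^d$ case: split $\|v\|_{2+4/d}^{2+4/d} = \int |v|^{4/d}|v|^2$, bound $|v|^{4/d} \le 2^{4/d}(|v_{<E}|^{4/d} + |v_{\geq E}|^{4/d})$ at each point, integrate, use the pointwise bound on $v_{<E}$ for the low part and $\|v_{\geq E}\|_2^2 \le \pi^{-2}E^{-1}\|\nabla v\|_2^2$... — actually the cleanest form (as in Frank's and Lundholm's treatments) writes
\[
\|v\|_{2+4/d}^{2+4/d} \le C \int_0^\infty E^{2/d}\,\|v_{\geq E}\|_2^2 \,\frac{\di E}{E} \quad\text{times a counting constant,}
\]
then uses $\|v_{\geq E}\|_2^2 \le \min\{\|v\|_2^2, \pi^{-2}E^{-1}\|\nabla v\|_2^2\}$ and evaluates the resulting elementary integral, which produces the factor $\frac{d^2}{(d+4)(d+2)}$ together with $\pi^2$ from the eigenvalue normalization and $N_d^{-2/d}$ from the weighted lattice-point count. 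Rearranging gives exactly $G_Q(d) \ge \pi^2 d^2 / [(d+4)(d+2) N_d^{2/d}]$.

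It then remains to prove the estimate \eqref{eq:bound_Nd}, $N_d \le \sum_{\ell=1}^d \binom{d}{\ell}\omega_\ell \ell^{-(d-\ell)/2}$. Here $N_d$ is (essentially) the continuum volume associated to the weighted count $\sum_{k\in\N_0^d, |k|<R} 2^{\#\{j:k_j\neq 0\}}$ normalized by $R^d$: stratify $\N_0^d$ by the support $S = \{j: k_j\neq 0\}$ of $k$. For a fixed $S$ with $|S|=\ell$, the points $k$ with support exactly $S$ and $|k|<R$ are lattice points in the positive orthant of an $\ell$-dimensional ball of radius $R$, contributing a volume $\sim \omega_\ell R^\ell / 2^\ell$, and each carries weight $2^\ell$; summing over the $\binom{d}{\ell}$ choices of $S$ gives $\sim \binom{d}{\ell}\omega_\ell R^\ell$. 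Dividing by $R^d$ and... — the factor $\ell^{-(d-\ell)/2}$ is not yet visible, so the bookkeeping must actually track a slightly different scaling: I suspect $N_d$ is defined so that $R$ is measured along a diagonal-type normalization (the $\ell$ active coordinates each contribute, while the $d-\ell$ inactive ones are "compressed"), and the $\ell^{-(d-\ell)/2}$ is exactly the Jacobian/homogeneity mismatch between an $\ell$-dimensional slab and the ambient $d$-dimensional scaling, with the minimum nonzero value $1$ on active coordinates forcing the factor $\ell^{1/2}$ per inactive dimension. I would pin down the precise definition \eqref{eq:def_Nd} first and then verify the stratification bound term by term; the main obstacle is getting this geometric/combinatorial counting estimate sharp enough — replacing a lattice-point count by a clean volume while keeping the constant explicit — rather than the Rumin splitting itself, which is by now routine.
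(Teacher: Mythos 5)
Your proposal follows essentially the same route as the paper: Neumann-eigenfunction expansion on $Q_d$, observation that subtracting $u_\Omega$ removes the $\bvec k = 0$ mode, high/low energy splitting \`a la Rumin, a pointwise Cauchy--Schwarz bound for the low-energy part with $\|u_{\bvec k}\|_\infty^2 = C_{\bvec k}^2$, and stratification by support size to control the counting constant $N_d$. The overall architecture matches the paper's. Two places need to be closed, though.

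First, your energy splitting is sketched twice (once as a pointwise split at fixed $E$, once as a weighted $L^2$ integral with a $\min$ estimate) and you never commit to a precise form; it is therefore not verified that the constant $d^2/((d+2)(d+4))$ actually comes out. The paper's version is simpler and pins the constant directly: write $\int_{Q_d}|\nabla v|^2 = \int_0^\infty\|P_{\geq E} v\|_2^2\, \di E$ (the spectral layer-cake identity, using that the Neumann eigenvalues are $\pi^2|\bvec k|^2$), then bound pointwise $|P_{\geq E} v(x)| = |v(x) - P_{<E}v(x)| \geq \bigl[\,|v(x)| - \|v\|_2 (E/\pi^2)^{d/4} N_d^{1/2}\bigr]_+$ using the Cauchy--Schwarz estimate you already have. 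Integrating in $E$ first, by a change of variables, gives $|v(x)|^{2+4/d}\|v\|_2^{-4/d}\pi^2 N_d^{-2/d}\int_0^1(1-t^{d/4})^2\,\di t$, and that last integral is exactly $d^2/((d+2)(d+4))$; no further optimization over $E$ is required.

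Second, and this is the genuine gap you flag yourself, the bound \eqref{eq:bound_Nd} on $N_d$. Your stratification is correct up to the point $\sum_\ell\binom{d}{\ell}\omega_\ell\, r^\ell$ (the weight $2^\ell$ and the orthant-volume factor $2^{-\ell}$ cancel), but you cannot locate the $\ell^{-(d-\ell)/2}$. The missing observation is elementary: the stratum with support size $\ell$ is empty unless $r \geq \sqrt\ell$, since the minimum of $|\bvec k|$ over $\bvec k\in\N^\ell$ is attained at $\bvec k=(1,\dots,1)$. On that range $r^\ell = r^d\cdot r^{-(d-\ell)} \leq r^d\,\ell^{-(d-\ell)/2}$, which is exactly the price of converting the stratum-wise $r^\ell$ count into the uniform $r^d$ scaling that defines $N_d$. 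With this observation the term-by-term bound, and hence \eqref{eq:bound_Nd}, follows.
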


The proof of this theorem is simpler than the proof for general convex domains. It follows closely the strategy in \cite[Theorem 4.26]{Lun017}. Theorem~\ref{thm:rumin} also gives better values for the constants. In Table~\ref{tbl:cubes}, we compare the numerical values for the constant in Theorem~\ref{thm:rumin} with the bound obtained for general convex domains. The first column contains the upper bound that will be proven in the next section.
\begin{table}[h!]
 \begin{tabular}{l | c | c | c}
 $d$ & $G(d)/ 4$ &  $G_1(Q_d, d)$ & $G_2(Q_d, d)$ \BSC \\ 
 \hline
 1 &   0.62     &   --  &   0.16\TSC  \\
 2 &  1.46    &     --    &  0.40           \\
3  &   2.39    &     0.1838    &    0.71      \\
4  &   3.37    &     0.0041    &   0.63     \\
5  &   4.37   &     0.0002    &    0.69\BSC   \\
\hline
\end{tabular}
\caption{\label{tbl:cubes} Comparison of the upper bound for $G_Q(d)$ (from the numerical values in Table~\ref{tbl:GNS}) with the lower bounds obtained in Theorem~\ref{thm:GNSconvex} and Theorem~\ref{thm:rumin}. See Remark~\ref{rmk:counting} for the values of $G_2(Q_d, d)$ for $d=2,3$.}
\end{table}

\begin{proof}
The starting point is the following representation of the gradient term, valid for all $u \in H^1(Q_d)$. We define $v = u - u_{Q_d}$ and write 
\begin{align*}
\int_{Q_d} \abs{\nabla u}^2= \int_{Q_d} \abs{\nabla v}^2 &= \sum_{\bvec k} E_{\bvec k} \abs{\inprodtwo{v}{ u_{\bvec k}}}^2 \\
& =\int_{0}^{\infty} \norm{P_{\ge E} v}^2 \di E, 
\end{align*}
where $u_{\bvec k}, E_{\bvec k}$ are the eigenfunctions and eigenvalues of the Neumann Laplacian on the cube, indexed by $\bvec k \in \N_0^d$, and $P_{\ge E}$ is the associated projector on energies below $E$.
Explicitly, 
\[
E_{\bvec k} = \pi^2 \abs{\bvec k}^2,  \quad u_{\bvec k} = C_{\bvec k}\cos(\pi k_1 x_1) \cos (\pi k_2 x_2) \cdots \cos(\pi k_d x_d),
\]
with normalization constant $C_{\bvec k}= 2^{\ell/2}$ with $\ell $ the number of nonzero components of $\bvec k$.
Now, we can bound
\begin{align*}
\abs {P_{ < E} v}^2(x) &= \Big( \sum_{\substack{\bvec k \in \N_0^d \\ 0 < \abs{\bvec k}^2 < E/\pi^2 }}   u_{\bvec k} (x )\inprodtwo{ u_{\bvec k}}{v}  \Big)^2 \\
& \le  \Big( \sum_{\substack{\bvec k \in \N_0^d \\ 0 < \abs{\bvec k}^2 < E/\pi^2 }}   \abs{u_{\bvec k} (x )}^2 \Big) \Big( \sum_{\bvec k \in \N_0^d} \abs{\inprodtwo{ u_{\bvec k}}{v}}^2 \Big) \\
& \le \norm{u}_2^2 \sum_{\substack{\bvec k \in \N_0^d \\ 0 < \abs{\bvec k}^2 < E/\pi^2 }}   C_{\bvec k}^2.
\end{align*} 
Note that there is no contribution of $\bvec k = 0$, since the average of $v$ vanishes by definition.
We will prove below that the inequality 
\begin{equation} \label{eq:def_Nd}
\sum_{\substack{\bvec k \in \N_0^d \\ 0 < \abs{\bvec k}^2 < E/\pi^2 }}   C_{\bvec k}^2 
\le N_d (E / \pi^2)^{d/2} 
\end{equation}
holds, for some finite constant $N_d$.
Assuming this for the moment, we write
\begin{align*}
 \int_{Q_d} \abs{\nabla v}^2 & = \int_{Q_d}\int_{0}^{\infty} \abs{P_{\ge E} v(x)}^2 \di E \di x \\
 & =  \int_{Q_d}\int_{0}^{\infty} \left(v(x) - P_{< E} v(x)\right)^2 \di E \di x \\
  & \ge \int_{Q_d}\int_{0}^{\infty} \left[\abs{v(x)} - \norm{v}E^{d/4} \pi^{-d/2} N_d^{1/2} \right]_+^2 \di E \di x \\
  & = \int_{Q_d} \abs{v(x)}^{2+4/d} \di x   \norm{v}^{-4/d} \pi^2 N_d^{-2/d} \int_0^1 (1-t^{d/4})^2 \di t,
\end{align*}
where the last equality follows by a change of variables.
The final integral gives a numerical constant
\begin{align*}
\int_0^1 (1-t^{d/4})^2 \di t &= \int_{-\infty}^0 (1-e^{ds/a})^2 e^s \di s \\
&= 1- \frac{2}{1+d/4}+\frac{1}{1+d/2} = \frac{d^2}{(d+2)(d+4)}.
\end{align*}
Since $\norm{v}_2 \le \norm{u}_2$, assuming \eqref{eq:def_Nd}, we have proven for all $u \in H^1(Q_d)$,
\[
\ \int_{Q_d} \abs{\nabla u}^2  \ge \int_{Q_d} \abs{u-u_{Q_d}}^{2+4/d} \left(\int_{Q_d} u^2 \right)^{-2/d} \frac{d^2}{(d+2)(d+4)} \pi^2 N_d^{-2/d}.
\]

It remains to prove that \eqref{eq:def_Nd} holds with $N_d$ given by \eqref{eq:bound_Nd}. 
Separating the summands according to the values of $C_{\bvec k}$, we find
\begin{align*}
\sum_{\substack{\bvec k \in \N_0^d \\ 0 < \abs{\bvec k} < r }}   C_{\bvec k}^2
&= \sum_{\ell=1}^d \binom{d}{\ell} 2^\ell \sum_{\substack{\bvec k \in \N^\ell \\ 0 < \abs{\bvec k} < r }} 1\\
& \le \sum_{\ell=1}^d \binom{d}{\ell} 2^\ell \frac{r^\ell \omega_\ell }{2^\ell} \chi (r \ge \sqrt \ell) \numberthis \label{eq:bound2_Nd}\\
& \le  r^d \sum_{\ell=1}^d \binom{d}{\ell} \frac{ \omega_\ell }{ \ell^{(d-\ell)/2}},
\end{align*}
where the first inequality bounds the number of integer points by the surface of the intersection of a ball with the first quadrant. 
\end{proof}

\begin{remark} \label{rmk:counting}
In lower dimensions it is possible to find better bounds on $N_d$, by using inequality \eqref{eq:bound2_Nd} for larger $r$ and checking \eqref{eq:def_Nd} explicitly for all lower values of $r$. This improves the bounds for $G_Q(d)$. An explicit verification for all $\bvec k \in \N_0^2$ with $\abs{\bvec k} \le 4 $ gives $N_2 \le \pi + 4/\sqrt{17} \approx 4.11 $.
For $d = 3$, an explicit verification for $\abs{\bvec k} \le \sqrt{18}  $ shows that $N_3 \le (4 \pi/3 + 3 \pi/\sqrt{19}+6/19 \approx 6.67 $.
With these values of $N_d$, we obtain the bounds
$G_Q(2) \ge 0.40$ and $G_Q(3) \ge 0.71$ respectively.
\end{remark}

 \section{Concentration and comparison in cubes}
\label{sec:GNS}

In this section, we prove the remaining statements of Theorem \ref{thm:meta}. We begin with the construction of test functions to obtain an upper bound on $G_Q(d)$.
\begin{lemma}
For all $d \ge 1$, we have $G_Q(d)  \le G(d)/4$.
\end{lemma}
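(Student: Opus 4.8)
The plan is to prove $G_Q(d) \le G(d)/4$ by using the (known) optimizer of the $\R^d$ problem \eqref{GNSRDIntro} to build test functions on the cube $[0,1]^d$. Let $\phi$ be the radial, positive, decreasing minimizer of \eqref{GNSRDIntro}, so that $\int_{\R^d}\abs{\nabla\phi}^2 (\int_{\R^d}\phi^2)^{2/d} = G(d) \int_{\R^d}\phi^{2(1+2/d)}$. The idea is to concentrate a rescaled copy of $\phi$ inside the cube: for small $\lambda>0$ set $\phi_\lambda(x) = \phi((x-x_0)/\lambda)$ centered at some interior point $x_0$, and use $u = \phi_\lambda$ restricted to $Q_d$ as a trial function. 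As $\lambda \to 0$, essentially all the mass of $\phi_\lambda$ concentrates near $x_0$, so $\int_{Q_d}\phi_\lambda^2 \to \lambda^d\int_{\R^d}\phi^2$ (up to the scaling factors), and similarly for the gradient and the $L^{2+4/d}$ norms; thus the scale-invariant Rayleigh quotient in \eqref{eq:var-prob} computed with $\phi_\lambda$ converges to the value $G(d)$ times the appropriate power.

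The one genuine subtlety — and the reason the factor is $G(d)/4$ rather than $G(d)$ — is that the cube functional \eqref{eq:var-prob} uses $u - u_{Q_d}$, not $u$ itself, in the denominator. When $\phi_\lambda$ is concentrated, its average $u_{Q_d} = \int_{Q_d}\phi_\lambda$ is of order $\lambda^d$, which is negligible compared to $\|\phi_\lambda\|_{2+4/d}$ in the relevant scaling regime, so $\int_{Q_d}\abs{\phi_\lambda - u_{Q_d}}^{2+4/d} \to \int_{\R^d}|\phi|^{2+4/d}$ (after rescaling) and the subtraction of the mean is harmless in the limit. So where does the $1/4$ come from? The natural explanation: the factor $1/4$ must match the reflection trick. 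Rather than placing one bump at an interior point, one places the bump at a corner (or reflects across a face), so that only a fraction — namely $1/2^d$ in the naive corner placement, but more cleverly via even reflection giving the full profile — of the optimal profile sits inside the cube. In fact the cleanest route is: take $x_0$ on the boundary (say a corner), use the half/quarter of the bump that lands in $Q_d$, and note that $\phi$ being even about $x_0$ in each coordinate means the gradient and all relevant integrals over $Q_d$ are exactly $1/2^d$ of those over $\R^d$ — but then the Rayleigh quotient is scale- and this-fraction-invariant, giving back $G(d)$, not $G(d)/4$. So the $1/4$ is genuinely dimension-independent and must come from a different comparison; I suspect it comes from the one-dimensional relation $G_Q(1) = G(1)/4$ established in point \ref{it:d=1}, tensorized: one builds test functions on $Q_d$ as products involving the 1D near-optimizers, and the constant $1/4 = G_Q(1)/G(1)$ propagates.

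Concretely, the steps I would carry out are: (1) recall/fix the $\R^d$ minimizer $\phi$ and its Euler–Lagrange scaling, and similarly recall that in $d=1$ the infimum $G_Q(1) = \pi^2/16 = G(1)/4$ is approached by a one-parameter family $\psi_\varepsilon$ on $[0,1]$; (2) define the trial function on $Q_d$ — I expect the right choice to be a concentrated bump $\phi_\lambda$ near a point that is a corner, or more likely a product structure that "sees" one cube direction as the constrained one — and compute the three integrals $\int_{Q_d}\abs{\nabla u}^2$, $\int_{Q_d}u^2$, $\int_{Q_d}\abs{u-u_{Q_d}}^{2+4/d}$ asymptotically in the small-scale parameter; (3) check that $u_{Q_d}$ is lower-order so it drops out; (4) take the limit of the Rayleigh quotient and read off that it converges to $G(d)/4$, hence $G_Q(d)\le G(d)/4$.

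The main obstacle, as the discussion above shows, is pinning down exactly which test-function construction produces the factor $1/4$ uniformly in $d$: a single concentrated isotropic bump, whether placed in the interior or at a corner, gives back the full $G(d)$ by scale invariance, so the $1/4$ has to come from an anisotropic or product construction that degenerates in one direction and inherits the $d=1$ constant ratio. Getting that construction right — and verifying that the mean-subtraction and the boundary truncation both remain negligible in the limit — is the crux; the remaining estimates are routine changes of variables and dominated convergence arguments.
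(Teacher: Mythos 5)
Your initial instinct — concentrate a rescaled copy of the $\R^d$ minimizer at a corner of $Q_d$ — is exactly the paper's construction. The gap is in your scaling analysis: you assert that placing the bump at a corner, so that only the $1/2^d$ fraction lands inside the cube, leaves the Rayleigh quotient unchanged ("scale- and this-fraction-invariant, giving back $G(d)$"), and then you reject the corner construction and speculate about a tensorized product argument. But the quotient is \emph{not} invariant under this restriction, and that is precisely where the factor $1/4$ comes from.

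Look at the homogeneity of the functional
\[
R(u) = \frac{\int \abs{\nabla u}^2 \, \Bigl(\int u^2\Bigr)^{2/d}}{\int \abs{u}^{2+4/d}}.
\]
If $g$ is radial and centered at the corner $0$, and we restrict all three integrals from $\R^d$ to $Q_d$ (equivalently, multiply each by $2^{-d}$), then the gradient integral and the $L^{2+4/d}$ integral each pick up a factor $2^{-d}$, and these cancel; but the $L^2$ integral also picks up $2^{-d}$, and it enters the quotient raised to the power $2/d$. The net effect is a multiplicative factor $\bigl(2^{-d}\bigr)^{2/d} = 2^{-2} = 1/4$, independent of $d$. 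So the corner-placed, $\lambda$-rescaled bump $u_\lambda(x)=\lambda^{d/2}g(\lambda x)$ gives, in the limit $\lambda\to\infty$, a Rayleigh quotient converging to $G(d)/4$, not $G(d)$. The remaining work in the paper is exactly the bookkeeping you anticipate in your steps (2)--(3): showing that the mean $\overline{u_\lambda}$ and the truncation outside $B(0,\lambda)$ are lower order (for which the paper assumes $g \in L^1$, handling the general case by truncating $g$ at level $\epsilon$ and sending $\epsilon \to 0$ after $\lambda\to\infty$). Your speculative detour via 1D near-optimizers and a product structure is unnecessary; once you correct the fraction-scaling computation, your original plan reproduces the paper's proof.
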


\begin{proof}
Let $g= g(r)$ be a non-negative, spherically symmetric minimizer for the problem in $\R^d$ \eqref{GNSRDIntro}. Assume for the moment that $g \in L^1(\R^d)$. Then 
for $\lambda > 0$, we define
$$
u_\lambda (r  \omega) = \lambda^{d/2} g(\lambda r), 
$$
where $ \omega \in \mathbb{S}^{d-1}$. 
By scaling and by using that $2^d$ copies of $Q_d= [0,1]^d$ cover $[-1,1]^d$, we find
\begin{align*}
0 \le \int_{Q_d} u_\lambda    \equiv \overline{u_\lambda} &\le \lambda^{-d/2} 2^{-d}\int_{\R^d} g  ,\\
\int_{Q_d} u_\lambda^2  & \le  2^{-d}\int_{\R^d} g^2  ,\\
\int_{Q_d} \abs{\nabla u_\lambda}^2   &\le \lambda^2 2^{-d}\int_{\R^d} \abs{\nabla g}^2  .
\end{align*}
For the denominator, we bound
\begin{align*}
\int_Q \abs{u_\lambda-\overline{u_\lambda}} ^{2+4/d}  
&\ge \int_Q u_\lambda^{2+4/d}    -(2+4/d)\overline{u_\lambda} \int_Q u_\lambda^{1+4/d}   \\
&\ge \lambda^2 2^{-d}\int_{B(0, \lambda)} g^{2+4/d}   - C\lambda^{-d + 2} \int_{\R^d} g   \int_{\R^d} g^{1+4/d}  \\
& \ge \lambda^2 2^{-d}\int_{B(0, \lambda)} g^{2+4/d}   -C\lambda^{-d + 2} \left(\int_{\R^d} g   \right)^{\frac{2d+4}{d+4}} \left(\int_{\R^d} g^{2+4/d}  \right)^{\frac{4}{d+4}},
\end{align*}
where $C$ is a positive constant.
Thus, we obtain
\begin{align*}
G_Q(d) 
&\le \liminf_{\lambda \to \infty } \frac{\int_Q \abs{\nabla u_\lambda}^2    \left(\int_Q u_\lambda^2    \right)^{2/d}}{\int_Q \abs{u_\lambda-\overline{u_\lambda}} ^{2+4/d}   } \\
& \le \left(2^{-d} \right)^{2/d} \frac{\int_{\R^d} \abs{\nabla g}^2    \left(\int_{\R^d} g^2    \right)^{2/d}}{\int_{\R^d} g^{2+4/d}  } = G(d)/4.
\end{align*}
If $g$ is not in $L^1(\R^d)$, then we may apply the same strategy to $[g-\epsilon]_+$, and take $\epsilon \to 0$ after having taken $\lambda \to \infty$.
\end{proof}

In the one-dimensional case, we obtain the corresponding lower bound as well.
\begin{theorem} In one dimension, we have
$$
G_Q(1) =  \frac{G(1)}{4} = \frac{\pi^2}{16},
$$
and the infimum is not attained.
\end{theorem}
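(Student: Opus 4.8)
The plan is to match the upper bound $G_Q(1)\le G(1)/4=\pi^2/16$, which already follows from the preceding lemma together with the known value $G(1)=\pi^2/4$, with a lower bound $G_Q(1)\ge\pi^2/16$. First I would reduce to a convenient class of competitors: if $\int_0^1(u')^2=\infty$ the defining inequality is trivial, so one may assume $u\in H^1([0,1])$, and since replacing $u$ by $v:=u-u_{[0,1]}$ leaves $\int(u')^2$ and $\int|u-u_{[0,1]}|^6$ unchanged while not increasing $\int u^2$, it suffices to prove
\[
\int_0^1(v')^2\Big(\int_0^1 v^2\Big)^2\ \ge\ \frac{\pi^2}{16}\int_0^1|v|^6
\]
for all $v\in H^1([0,1])$ with $\int_0^1 v=0$, $v\not\equiv 0$. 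Such a $v$ is continuous, so $\{v>0\}$ and $\{v<0\}$ are each a countable disjoint union of open subintervals of $[0,1]$, and both are nonempty since $\int_0^1 v=0$.

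The analytic heart is a half-line inequality: for every $p\in H^1([0,\infty))$,
\[
\int_0^\infty(p')^2\ \ge\ \frac{\pi^2}{16}\Big(\int_0^\infty p^2\Big)^{-2}\int_0^\infty p^6 .
\]
I would obtain this by applying \eqref{GNSRDIntro} in dimension $1$ (with its sharp constant $\pi^2/4$) to the even reflection $\tilde p(x):=p(\abs{x})$, which lies in $H^1(\R)$ because the continuity of $p$ at the origin prevents a singular term in $\tilde p'$; since $\int_\R|\tilde p|^q=2\int_0^\infty|p|^q$ for $q=2,6$ and $\int_\R(\tilde p')^2=2\int_0^\infty(p')^2$, the three factors of $2$ convert $\pi^2/4$ into $\pi^2/16$.

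Next I would decompose. List the components of $\{v>0\}\cup\{v<0\}$ as intervals $C$; at most one abuts the endpoint $0$ and at most one abuts $1$, while at any endpoint of a component lying strictly inside $[0,1]$ continuity forces $v=0$. Orienting each $C$ so that its endpoint in $\{0,1\}$, if any, is placed at the origin, the restriction $v|_C$ extended by zero defines $p_C\in H^1([0,\infty))$ with $\int_0^\infty|p_C|^q=\int_C|v|^q$ for $q=2,6$ and $\int_0^\infty(p_C')^2=\int_C(v')^2$. Applying the half-line inequality on each $C$, using $\int_C v^2\le\int_0^1 v^2$, and summing over the pairwise disjoint components gives
\[
\int_0^1|v|^6=\sum_C\int_C|v|^6\ \le\ \frac{16}{\pi^2}\Big(\int_0^1 v^2\Big)^2\sum_C\int_C(v')^2\ \le\ \frac{16}{\pi^2}\Big(\int_0^1 v^2\Big)^2\int_0^1(v')^2,
\]
which is the sought lower bound, so $G_Q(1)=\pi^2/16=G(1)/4$.

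For the non-attainment: if an admissible $v\not\equiv 0$ with $\int_0^1 v=0$ realized equality, then, tracing the chain above, equality in $(\int_C v^2)^2\le(\int_0^1 v^2)^2$ would be forced for every component $C$ with $\int_C|v|^6>0$ (the only alternative, $\int_C(v')^2=0$, would force $\int_C|v|^6=0$). Since $\{v>0\}$ and $\{v<0\}$ each contribute at least one such component and these are disjoint, this gives $\int_0^1 v^2\ge 2\int_0^1 v^2>0$, a contradiction; hence the infimum is not attained. I expect the main obstacle to be purely organizational: handling the orientation of the boundary-touching components and checking that extension by zero really stays in $H^1([0,\infty))$, together with the simple but essential remark that packing all of the $L^2$-mass into one component contradicts the zero mean. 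The half-line inequality itself is an immediate unfolding of the sharp one-dimensional constant on $\R$.
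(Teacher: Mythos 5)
Your proof is correct and reaches the result by a genuinely different route from the paper's. The paper splits $v = u - u_I$ into its positive and negative parts $u_\pm$, passes to their decreasing rearrangements $u_\pm^*$ on $I=[0,1]$, reflects each to an even function on $\R$, and then invokes \eqref{GNSRDIntro}; this uses the one-dimensional P\'olya--Szeg\H{o} inequality $\int_I (u_\pm^{*}{}')^2 \le \int_I (u_\pm')^2$ plus an elementary algebraic inequality (the one labelled \eqref{eq:cross-terms}) to recombine the two signed pieces. You instead decompose $[0,1]$ into the connected components of $\{v>0\}$ and $\{v<0\}$, extend each restriction by zero to the half-line, and apply a sharp half-line GNS bound with constant $\pi^2/16$ obtained by even reflection into $\R$. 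Your route avoids rearrangement and P\'olya--Szeg\H{o} entirely, and it makes the loss mechanism transparent: the only step that can be slack is replacing $\int_C v^2$ by $\int_0^1 v^2$, so equality would force $\int_C v^2 = \int_0^1 v^2$ on at least two disjoint components (one from each sign), which is absurd; this is a cleaner non-attainment argument than the paper's, which requires tracing the cross terms in its algebraic inequality and which in fact contains a small sign typo ($\int_I u_+ = -\int_I u_-$ where $\int_I u_+ = \int_I u_-$ is meant). The trade-off is that your decomposition into disjoint open intervals is a strictly one-dimensional device (as, to be fair, is the whole theorem). Two small points worth making explicit in a polished write-up: for a countable family of components the termwise bounds do sum, since $\sum_C \int_C (v')^2 \le \int_0^1 (v')^2$ by disjointness and nonnegativity; and for each boundary-touching component the restriction of $v$ vanishes at its interior endpoint by continuity, so the zero extension indeed lies in $H^1([0,\infty))$ -- you note both in passing, and both are fine.
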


\begin{proof} We will write $Q_1 = I$.
Fix $u\in H^1(I) $.
Upon replacing $u$ by $u-u_I$, we may assume $ u_I=0$. 
Denote by $u_+\equiv \max(u,0)$ and $u_-\equiv \max(-u,0)$ the positive and negative parts of $u$, and by $u_\pm^*$ their nonincreasing rearrangements.
We construct $f_\pm \in H^1(\R)$ by reflecting $u_\pm^*$ with respect to $0$.
By applying the GNS inequality (\ref{GNSRDIntro}), with $d=1$,  to $f_\pm$, we find
\begin{align*}
 G(1) 
 &\le \frac{\int_\R(f'_\pm)^2  \left( \int_\R f_\pm^2\right)^2}{\int_\R f_\pm^6}  \\
 & = \frac{2 \int_I ({u^*_\pm}')^2 \left( 2\int_I {u^*_\pm}^2\right)^2}{2\int_I {u^*_\pm}^6 } \\
 &\le 4\frac{ \int_I (u'_\pm)^2 \left( \int_I u_\pm^2\right)^2}{\int_I u_\pm^6 }.
\end{align*}
Now we can bound
\begin{align*}
\frac{ \int_I (u')^2 \left( \int_I u^2\right)^2}{\int_I u^6 } 
&= \frac{ \left( \int_I (u'_+)^2 +\int_I(u'_-)^2 \right)\left( \int_I u_+^2+ \int_I u_-^2\right)^2}{\int_I u_+^6+ \int_Iu_-^6 } \\
&\ge \frac{ \int_I (u'_+)^2\left( \int_I u_+^2\right)^2 +\int_I(u'_-)^2 \left( \int_I u_-^2\right)^2}{\int_I u_+^6+ \int_Iu_-^6 } \numberthis \label{eq:cross-terms}\\
&\ge \frac{G(1)}{4}
\end{align*}
If the infimum would be attained, for some function $u$ with $\int_I u = 0$ there would be equality in \eqref{eq:cross-terms}. 
This implies that
\begin{align*}
\int_I (u_+')^2 \int_I u_-^2 = 0 \quad \text{and } \int_I u_+ = -\int_I u_-,
\end{align*}
which is impossible for $u \neq 0$.
\end{proof}

Finally, for dimensions $d \ge 1$, we obtain the following dichotomy.
\begin{theorem} \label{thm:dichotomy}
 For $d \ge 2$, if a minimizer for $G_Q(d)$ does not exist, then 
 \[
G_Q (d) = G(d)/4.
 \]
\end{theorem}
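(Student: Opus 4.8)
The plan is to run the standard concentration-compactness dichotomy on a minimizing sequence for the scale-invariant quotient defining $G_Q(d)$, and to argue that the only way compactness can fail is by mass escaping to the boundary at a point where the cube looks locally like a half-space — equivalently, like all of $\R^d$ after unfolding — which produces exactly the constant $G(d)/4$. First I would fix a minimizing sequence $u_n \in H^1(Q_d)$ for the quotient in \eqref{eq:var-prob}; by the homogeneities of the functional (replacing $u_n$ by $u_n - (u_n)_{Q_d}$, then by $a u_n$) we may normalize $\int_{Q_d} u_n = 0$ and $\int_{Q_d}(u_n - (u_n)_{Q_d})^{2+4/d} = 1$, so that $\int_{Q_d}|\nabla u_n|^2 \big(\int_{Q_d}u_n^2\big)^{2/d} \to G_Q(d)$. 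The lower bound in Theorem~\ref{thm:rumin} forces $G_Q(d) > 0$, so $\|\nabla u_n\|_2$ is bounded below; and since $G_Q(d) \le G(d)/4 < \infty$, after possibly rescaling $u_n$ we may also take $\|\nabla u_n\|_2$ bounded above and $\|u_n\|_2$ bounded, so $u_n$ is bounded in $H^1(Q_d)$ and we may extract a weak limit $u_n \rightharpoonup u$.

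Next I would split into the two alternatives. If $u \not\equiv 0$ (and $u$ non-constant), then by weak lower semicontinuity of the Dirichlet form and $L^2$ norm, together with the compact embedding $H^1(Q_d) \hookrightarrow L^{2+4/d}(Q_d)$ (valid since $2+4/d < 2d/(d-2)$ for $d\ge 3$, and for $d=2$ any exponent), the normalization passes to the limit: $\int_{Q_d} u = 0$, $\int_{Q_d} u^{2+4/d} = 1$, and the quotient at $u$ is $\le \liminf$ of the quotients, hence equals $G_Q(d)$, so $u$ is a minimizer. This is the case excluded by hypothesis. The remaining case is $u_n \rightharpoonup 0$; by the compact embedding this gives $\int_{Q_d}|u_n|^{2+4/d} \to 0$, contradicting the normalization, \emph{unless} the $L^{2+4/d}$ mass fails to be captured, which — $Q_d$ being bounded — cannot happen by compactness of the embedding. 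So in fact weak convergence to $0$ in the bounded-gradient normalization is impossible, and the genuine loss of compactness must instead be seen at the level of rescalings: the correct picture is that along a minimizing sequence one has concentration of the numerator's mass in a shrinking region. I would therefore run the dichotomy directly on the scale-invariant quantities, using the group of dilations: set $\lambda_n^{-1}$ equal to a concentration length scale (e.g. chosen so that $\int_{Q_d}|u_n - (u_n)_{Q_d}|^{2+4/d}$ over a ball of radius $\lambda_n^{-1}$ captures half the total), translate the concentration point $x_n$ to the origin, and consider $w_n(y) = \lambda_n^{-d/2} (u_n - (u_n)_{Q_d})(x_n + \lambda_n^{-1} y)$, which lives on the dilated domain $\lambda_n(Q_d - x_n)$. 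The key dichotomy step is: either $\lambda_n$ stays bounded (no concentration — then the previous paragraph's argument applies on a fixed domain and we get a minimizer), or $\lambda_n \to \infty$, in which case $\lambda_n(Q_d - x_n)$ exhausts either all of $\R^d$ (if $x_n$ stays at distance $\gg \lambda_n^{-1}$ from $\Bound$) or a half-space (if $x_n$ approaches a face), or a wedge/corner.

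The main obstacle — and the crux of the proof — is identifying the limiting problem on the unbounded domain and showing its optimal constant is $G(d)/4$ rather than something smaller. On all of $\R^d$, the limiting quotient $\int |\nabla w|^2 (\int w^2)^{2/d} / \int |w|^{2+4/d}$ has optimal constant exactly $G(d)$ (here the subtracted average disappears in the limit since mass escapes any fixed set), and one checks $G_Q(d) \ge G(\R^d\text{-limit}) = G(d) > G(d)/4$, contradicting $G_Q(d) \le G(d)/4$; so concentration in the interior is ruled out. If $x_n \to$ a point of a face, the limiting domain is a half-space $\R^d_+$ with Neumann conditions, and by even reflection across the hyperplane the half-space Neumann quotient has the \emph{same} optimal constant as the $\R^d$ one, namely $G(d)$ — again strictly above $G(d)/4$, a contradiction, unless the limiting minimizer is \emph{not} symmetric under the reflection, which cannot happen because the $\R^d$ minimizer is the unique radial one. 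The only configuration that survives is concentration at a corner of $Q_d$: there the dilated domain exhausts a quadrant (an orthant, a fundamental domain of the reflection group $(\Z/2)^d$), and unfolding by the $2^d$ reflections turns a Neumann test function on the orthant into an $H^1(\R^d)$ function, with the numerator multiplied by $2^d$ and the denominator (being homogeneous of degree $2+4/d$ and the $L^2$ factor of degree $2\cdot 2/d$) multiplied by $2^d$ as well; tracking the powers exactly as in the test-function lemma above gives an effective constant $(2^{-d})^{2/d} G(d) = G(d)/4$. Hence if no minimizer exists the minimizing sequence must concentrate at a corner, and $G_Q(d) = G(d)/4$. I would close the argument by noting the matching upper bound $G_Q(d)\le G(d)/4$ is already established in the preceding lemma, so equality holds; the delicate points to write out carefully are the choice of concentration scale, the convergence $w_n \to w$ in the appropriate local sense on the exhausting domains (using the $H^1$ bound and local compactness away from the expanding boundary), and the verification that in the corner case the unfolded function indeed lies in $H^1(\R^d)$ so that \eqref{GNSRDIntro} applies.
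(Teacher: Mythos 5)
Your proposal takes a genuinely different route from the paper. The paper does not invoke concentration--compactness at all: it normalizes $\int_{Q_d} v_n = 0$, $\|v_n\|_2 = 1$, observes that in the non-compact case $\|v_n\|_{2+4/d}\to\infty$, truncates at the level $m_n = \|v_n\|_{2+4/d}^{1/2}$ to obtain a \emph{new minimizing sequence} $u_n = [v_n-m_n]_+ - [v_n+m_n]_-$ whose support has measure $O(m_n^{-1})$, and then applies a bespoke rearrangement lemma (Lemma~\ref{lem:rearrangements}, built on the small-volume isoperimetric theorem of Morgan--Johnson on the cube) to make $u_{n,\pm}$ radially decreasing about a corner, after which a single $2^d$-fold reflection and the $\R^d$ inequality finishes the job. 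Your approach, by contrast, rescales around a concentration point and identifies limit problems on exhausting domains; it is a legitimate alternative strategy, but much heavier, and you leave the delicate parts unproved (choice of concentration scale, exclusion of multi-bubble splitting, convergence on exhausting domains).

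There are also two concrete errors in the sketch. First, the normalization argument is circular: with $\int u_n=0$ and $\|u_n\|_{2+4/d}=1$ there is no remaining multiplicative freedom, so you cannot ``rescale'' to make $\|\nabla u_n\|_2$ bounded; on a bounded domain the dichotomy is precisely whether $\|\nabla u_n\|_2$ stays bounded (compact case) or not (concentration case), and you cannot assume the former without assuming a minimizer exists. Second, the claim that ``the half-space Neumann quotient has the same optimal constant as the $\R^d$ one, namely $G(d)$'' is wrong: restricting the radial $\R^d$ minimizer to a half-space halves all three integrals, so the half-space constant is $G(d)/2^{2/d}$, and more generally concentration on a $k$-dimensional face of $\partial Q_d$ gives the constant $G(d)\,2^{-2(d-k)/d}$. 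The argument survives because all of these are strictly above $G(d)/4$ when $k\ge 1$, so the worst case is still the corner $k=0$; but the conclusion you want (that $G_Q(d)\ge G(d)/4$ in the non-compact case) does not require identifying \emph{which} stratum concentration occurs on, only the uniform lower bound over all strata, which you should state and use directly. As presented, the proposal is not a complete proof; the paper's truncation-plus-rearrangement argument is substantially simpler and avoids all of these issues.
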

\begin{remarks}
 \begin{enumerate}[i)]
  \item Numerical simulations suggest that in $d=2$, minimizers do not exist due to concentration. This suggests the conjecture that $G_Q(2) = G(2)/4$.
  \item The reason why we are not able to obtain the sharp constants $G_Q(d)$ for $d>1$ is precisely because the rearrangement lemma \ref{lem:rearrangements} does only hold for functions supported in small sets, while in $d=1$ a rearrangement inequality is available for all nonnegative functions.
 \end{enumerate}

\end{remarks}

The proof of this theorem relies on the following rearrangement lemma.

\begin{lemma}\label{lem:rearrangements}
For all $d \ge 1$, there exists $V_d >0$ such that, 
 if $u \in H^1(Q_d)$ is nonnegative and $\abs{\supp u} \le V_d$, then 
 \[
 \int_{Q_d} \abs{\nabla u^*}^2 \le  \int_{Q_d} \abs{\nabla u}^2, 
 \]
 where $u^*$ is the rearrangement of $u$ such that its level sets are the intersections of $Q_d= [0,1]^d$ with a (hyper)-sphere centered at the origin.
\end{lemma}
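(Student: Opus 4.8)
The plan is to identify $u^*$ with a \emph{sector-decreasing rearrangement} and to run the classical P\'olya--Szeg\H{o} argument, the only non-routine input being a relative isoperimetric inequality in $Q_d$ tailored to small volumes. First I would set $V_d=\omega_d/2^d$, the volume of the ``unit corner sector'' $S_1:=Q_d\cap B(0,1)=\{x\in\R^d_+:\abs x<1\}$. Given $u\ge0$ in $H^1(Q_d)$ with $\abs{\supp u}\le V_d$, put $\mu(t)=\abs{\{u>t\}}$ and define $u^*$ by the requirement $\{u^*>t\}=S_{\rho(t)}$ with $2^{-d}\omega_d\rho(t)^d=\mu(t)$. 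Since $\mu(t)\le V_d$ one has $\rho(t)\le1$, so every super-level set is the genuine sector $Q_d\cap B(0,\rho(t))\subseteq Q_d$; hence $u^*$ is equimeasurable with $u$, nonnegative, supported in $S_1$, and equal to a radial profile $u^*(x)=\psi(\abs x)$ with $\psi$ non-increasing. That $u^*\in H^1(Q_d)$ will follow from the gradient bound below.

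Next I would reduce the gradient inequality to geometry. By the coarea formula (used with weights $\abs{\nabla u}$ and $\abs{\nabla u}^{-1}$) one has, for a.e.\ $t$, $\int_{\{u=t\}}\abs{\nabla u}\di\mathcal H^{d-1}\ge \big(P(\{u>t\};\operatorname{int}Q_d)\big)^2/(-\mu'(t))$ by Cauchy--Schwarz; integrating in $t$ gives $\int_{Q_d}\abs{\nabla u}^2\ge\int_0^\infty \big(P(\{u>t\};\operatorname{int}Q_d)\big)^2/(-\mu'(t))\di t$. For $u^*$ the gradient is constant on each spherical-cap level set, so Cauchy--Schwarz is an equality and, since $\{u^*>t\}=S_{\rho(t)}$, the same integral with $u$ replaced by $u^*$ equals $\int_{Q_d}\abs{\nabla u^*}^2$. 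As $\mu$ is the common distribution function, it therefore suffices to establish the \emph{sharp} relative isoperimetric inequality
\[
P(A;\operatorname{int}Q_d)\ \ge\ P\big(S_\rho;\operatorname{int}Q_d\big)=\tfrac d2\,\omega_d^{1/d}\,\abs A^{(d-1)/d},\qquad 2^{-d}\omega_d\rho^d=\abs A,
\]
for every finite-perimeter set $A\subseteq Q_d$ with $\abs A\le V_d$ (shrinking $V_d$ further if needed).

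To prove this I would localize. For each $p\in\partial Q_d$ the cube coincides, in a fixed ball $B(p,r_p)$, with its tangent cone $T_p=\R^k_+\times\R^{d-k}$, $k$ being the codimension of the open face of $\partial Q_d$ through $p$ (and $T_p=\R^d$ in the interior). By the Lions--Pacella isoperimetric inequality for convex cones, the sectors $T_p\cap B$ are isoperimetric in $T_p$ with isoperimetric constant $2^{-k/d}d\,\omega_d^{1/d}$, and this is minimal precisely at the vertices ($k=d$), where it equals $\tfrac d2\omega_d^{1/d}$. Now pass to the indecomposable components of $A$: a component of diameter less than a fixed $\delta_0>0$ (the smallest radius in a finite chart cover) lies in one chart, so the cone inequality applies and gives $P\ge\tfrac d2\omega_d^{1/d}(\text{its volume})^{(d-1)/d}$; a component of diameter $\ge\delta_0$ obeys, by slicing and the isoperimetric inequality in $(d-1)$-dimensional slices, a bound $P\ge c(\delta_0,d)(\text{its volume})^{(d-2)/(d-1)}$, and since $(d-2)/(d-1)<(d-1)/d$ this exceeds $\tfrac d2\omega_d^{1/d}(\text{its volume})^{(d-1)/d}$ once $V_d$ is small. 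Summing over components and using the subadditivity of $m\mapsto m^{(d-1)/d}$ closes it. (For $d=1$ this is the classical monotone rearrangement on an interval; an alternative for general $d$ is to first apply Steiner symmetrization in each coordinate, reducing to monotone super-level sets, and then induct on the dimension via the subgraph structure of such sets.)

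The main obstacle is exactly this isoperimetric step, and in particular obtaining the \emph{sharp} constant $\tfrac d2\omega_d^{1/d}$ rather than a merely comparable one: only the sharp inequality makes the P\'olya--Szeg\H{o} chain close, and it is what forces the tangent-cone (Lions--Pacella) comparison together with a careful treatment of sets that either meet $\partial Q_d$ away from a vertex or are spread out over the cube. This inequality may also be quoted from the literature on isoperimetric profiles of convex polytopes at small volume. Granting it, the chain of the second paragraph yields $\int_{Q_d}\abs{\nabla u^*}^2\le\int_{Q_d}\abs{\nabla u}^2$.
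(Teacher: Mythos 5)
Your overall plan coincides with the paper's: define the corner-sector rearrangement with $V_d=\omega_d/2^d$, reduce the gradient comparison to an isoperimetric inequality via the coarea formula and Cauchy--Schwarz (Talenti/P\'olya--Szeg\H{o}), and then supply a sharp relative isoperimetric inequality for small volume in $Q_d$. The paper proves the isoperimetric step by reflecting $[0,1]^d$ onto the flat torus $[-1,1]^d$ and invoking Morgan--Johnson \cite{MoJo000} (Theorem~4.4, case~b), which is a compactness argument; your ``quote it from the literature'' fallback is thus exactly what the paper does.

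Your attempted self-contained proof of the isoperimetric step has, however, a genuine gap in the large-diameter case. You claim that an indecomposable component $C$ with $\operatorname{diam} C\ge\delta_0$ satisfies $P(C;\operatorname{int}Q_d)\ge c(\delta_0,d)\,\abs C^{(d-2)/(d-1)}$, and then use $(d-2)/(d-1)<(d-1)/d$ to conclude. That bound is false. Take a dumbbell: two disjoint balls of volume $\abs C/2$ each, joined by a thin tube of length $\gtrsim\delta_0$ and radius $\epsilon\to 0$, all placed in the interior of $Q_d$. This set is connected (indecomposable), has diameter $\ge\delta_0$, and its perimeter is $\sim c_d\,\abs C^{(d-1)/d}$, which for small $\abs C$ is strictly \emph{smaller} than $\abs C^{(d-2)/(d-1)}$ since $(d-2)/(d-1)<(d-1)/d$ and $\abs C<1$. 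So the stated lower bound fails, and the slicing argument you invoke cannot deliver it either: the Vol'pert slicing inequality $P(C)\ge\int P(C_t;\text{slice})\,\di t$ only controls the part of the perimeter transverse to the slicing direction, and the resulting estimate degenerates to $P\gtrsim\abs C$, not $\abs C^{(d-2)/(d-1)}$. Note the dumbbell does not contradict the isoperimetric inequality itself ($2^{1/d}c_d>\frac d2\omega_d^{1/d}$), but it shows that the large-diameter case requires a qualitatively different argument --- which in Morgan--Johnson is precisely a blow-up/compactness argument, not an explicit decomposition plus power-counting. Your tangent-cone/Lions--Pacella reduction for small-diameter components is correct and is indeed the local ingredient in \cite{MoJo000}; the missing piece is the globalization, which you should either take from the literature (as the paper does) or replace with a genuine compactness or concentration argument.
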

This lemma will be proven in the appendix.

\begin{proof}[Proof of Theorem~\ref{thm:dichotomy}]
 Consider a minimizing sequence $(v_n)$ for \eqref{eq:var-prob}, normalized such that $\int_{Q_d} v_n = 0$ and $\int_{Q_d} v_n^2 = 1$. If $\norm{v_n}_{2+4/d}$ is bounded uniformly in $n$, then the sequence is bounded in $H^1({Q_d})$ and therefore has a subsequence that converges weekly in to some $v$. By the Rellich-Kondrachov theorem, this subsequence converges strongly in $L^1({Q_d})$, in $L^2({Q_d})$ and $L^{2+4/d}({Q_d})$ (note that $2+4/d < 2d/(d-2)$). On the other hand, using Fatou's lemma,
 \begin{align*}
  G_Q(d)  &\le \frac{\int_{Q_d} \abs{\nabla v}^2    \di x  \left(\int_{Q_d} v^2    \di x\right)^{2/d}}{ \int_{Q_d} \abs{v}^{2+4/d}  \di x } \\
  &\le \lim_{n \to \infty}\frac{\int_{Q_d} \abs{\nabla v_n}^2   \di x \left(\int_{Q_d} v_n^2    \di x \right)^{2/d}}{ \int_{Q_d} \abs{v_n}^{2+4/d}  \di x} = G_Q(d),
 \end{align*}
and $v$ is a minimizer.

 Thus, we may assume that $\norm{v_n}_{2+4/d} \to + \infty$ and show that in this case $G_Q(d) = G(d)/4$.
 We define $m_n= \norm{v_n}_{2+4/d}^{1/2} $ and consider as test functions
 $$
 u_n = [v_n-m_n]_+ -[v_n+m_n]_-.
 $$
 Note that
 \[
 \operatorname{ess\,sup}{\abs{v_n}} \ge  \norm{v_n}_{2+4/d} = m_n^2,
 \]
 so $u_n \neq 0$ as soon as $m_n > 1$.
We first show that $u_n$ is a minimizing sequence as well. We have that
\begin{align*}
 \int_{Q_d} u_n^2   &\le \int_{Q_d} v_n^2   ,  \qquad \int_{Q_d} \abs{\nabla u_n}^2    \le \int_{Q_d} \abs{\nabla v_n}^2   , \\
 \int_{Q_d} u_n   &= \int_{\{v_n \ge m_n\}} (v_n - m_n)    + \int_{\{v_n \le -m_n\}} (v_n + m_n)   \\
 & = \int_Q v_n    - \int_{\{\abs{v_n} < m_n\} } v_n   - m_n \abs{\{v_n \ge m_n\}} + m_n  \abs{\{v_n \le - m_n\}}   ,
\end{align*}
so since $\int_Q v_n   = 0$,
\begin{equation*}
\abs{\overline{u_n}} \equiv \abs{\int_Q u_n   }  \le m_n .
\end{equation*}
This means that
\begin{equation} \label{eq:bound-convexity}
 \norm{u_n- \overline u_n}_{2+4/d} \ge \norm{u_n}_{2+4/d} - m_n \ge \norm{v_n}_{2+4/d} -2 m_n = \norm{v_n}_{2+4/d} (1-2m_n^{-1}) ,
\end{equation}
so we can bound
$$
\int_Q \abs{u_n- \overline{u_n}}^{2+4/d}      \ge \int_Q \abs{v_n}^{2+4/d}     \left(1 - C m_n^{-1}  \right). 
$$
Combining these bounds, we find that 
$$
\lim_{n \to \infty} \frac{\int_{Q} \abs{\nabla u_n}^2     \left(\int_{Q} u_n^2     \right)^{2/d} }{\int_Q \abs{u_n- \overline{u_n}}^{2+4/d}    }
\le  \lim_{n \to \infty}\frac{\int_{Q} \abs{\nabla v_n}^2      \left(\int_{Q}v_n^2     \right)^{2/d}}{ \int_Q\abs{v_n}^{2+4/d}    \left(1 - C m_n^{-1}  \right)} = G_Q(d),
$$
so the sequence $u_n$ is a minimizing sequence as well. On the other hand, the support of $u_n$ becomes small, since
$$
1 = \int_Q v_n^2      \ge m_n \abs{\supp u_n}. 
$$
For $n$ large enough, $\abs{\supp u_n} \le V_d$ and we may apply Lemma~\ref{lem:rearrangements} separately to the positive and negative parts of $u_n$. The resulting rearranged 
function can be extended to $\R^d$ as a spherically symmetric function and therefore obeys the corresponding Gagliardo--Nirenberg inequality. As before, the spherically symmetric 
function contains $2^d$ copies of the original one, hence the factor $4 = 2^d \times (2^{d})^{2/d} / 2^d$.
We have
\begin{align*}
 &\frac{\int_{Q} \abs{\nabla u_n}^2     \left(\int_{Q} u_n^2      \right)^{2/d} }{\int_Q \abs{u_n- \overline u_n}^{2+4/d}   }  \\
 & \quad \ge \frac{\int_{Q} \abs{\nabla u_{n,+}}^2     \left(\int_{Q} (u_{n,+})^2     \right)^{2/d} + \int_{Q} \abs{\nabla u_{n,-}}^2     
 \left(\int_{Q} (u_{n,-})^2 \right)^{2/d}     }{\int_Q \abs{u_n}^{2+4/d} 
     \left(1-C m_n^{-1} \right)}\\
 & \quad \ge \frac{\int_{Q} \abs{\nabla u_{n,+}^*}^2      \left(\int_{Q} (u_{n,+}^*)^2     \right)^{2/d} + \int_{Q} \abs{\nabla u_{n,-}^*}^2    
  \left(\int_{Q} (u_{n,-}^*)^2     \right)^{2/d}  }{\left(\int_Q \abs{u_{n,+}^*}^{2+4/d}     +\int_Q \abs{u_{n,-}^*}^{2+4/d}     \right)\left(1-m_n^{-1} \right)}\\
 &\quad \ge \frac{G(d)}{4} \left(1-m_n^{-1} \right)^{-1}.
\end{align*}
which shows, by taking the limit $n \to \infty$,
$$
G_Q(d)  \ge \frac{G(d)}{4}.
$$
The opposite inequality has been proven above.
\end{proof}

\appendix
\section{Proof of Lemma~\ref{lem:rearrangements} (Rearrangements)}

The proof of Lemma~\ref{lem:rearrangements} relies on the following result about sets minimizing the perimeter for a given volume in $Q_d$. In this context, the \emph{volume} of a subset of $Q_d$ is its $d$-dimensional Lebesgue measure and the \emph{perimeter} is the $(d-1)$-dimensional Hausdorff measure of its boundary in $Q_d$. 

\begin{theorem}[Perimeter minimizers for small volume are balls, \cite{MoJo000}]
For each $d \ge 1$, there exist $V_d>0$ such that for $V \le V_d$, sets of volume $V$ minimizing the perimeter in $Q_d$ are intersections of $Q_d$ with a ball centered in a corner. 
\end{theorem}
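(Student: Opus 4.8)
The statement is precisely the Morgan--Johnson theorem, so the plan is to reproduce the standard geometric-measure-theory argument, keeping track of where the threshold $V_d$ enters. First I would establish \emph{existence}: on the bounded set $Q_d$ the relative perimeter $E \mapsto P(E;Q_d)$ is lower semicontinuous for $L^1(Q_d)$-convergence and the constraint $\{|E|=V\}$ is closed, so the direct method produces a minimizer $E_V$ for every $V\in(0,1)$, with no loss of mass; testing with a ball-cap at a corner already gives $P(E_V;Q_d)\le\tfrac12\,d\,\omega_d^{1/d}V^{(d-1)/d}$. For \emph{concentration}, take $V_n\to0$ and minimizers $E_n$: a density/monotonicity estimate combined with this perimeter bound forces $\operatorname{diam}(E_n)\to0$, so up to a subsequence $E_n$ concentrates at a single point $x_0\in\overline{Q_d}$, and the rescalings $F_n:=V_n^{-1/d}(E_n-x_0)$ converge in $L^1_{\mathrm{loc}}$ (along a further subsequence) to a volume-$1$ set $F_\infty$ that minimizes the relative perimeter inside the tangent cone $\mathcal C_{x_0}$ of $Q_d$ at $x_0$, using uniform density estimates to rule out volume loss in the blow-up.

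The next step is to \emph{locate $x_0$ at a corner}. At a point in the relative interior of a codimension-$k$ face of $Q_d$ the tangent cone is (an orthant in $\R^k$)$\times\R^{d-k}$; reflecting successively across the $k$ coordinate hyperplanes turns a relative perimeter minimizer of volume $1$ there into a set of volume $2^k$ in $\R^d$, and the Euclidean isoperimetric inequality then shows the cone's isoperimetric constant equals $2^{-k/d}d\,\omega_d^{1/d}$, attained exactly by the ball-cap centered on its apex subspace, with equality in Euclidean isoperimetry forcing $F_\infty$ to be such a ball-cap. Since $k\mapsto2^{-k/d}$ is strictly decreasing, the isoperimetric profile of $Q_d$ near $0^+$ is governed by the smallest of these constants, $\tfrac12 d\,\omega_d^{1/d}$, achieved only when $k=d$; hence $x_0$ must be a corner and $F_\infty$ the quarter-ball centered there. (Subadditivity of $t\mapsto t^{(d-1)/d}$ simultaneously shows $E_n$ has a single component, so no separate connectedness argument is needed.)

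Finally, \emph{regularity and rigidity} upgrade ``asymptotically a quarter-ball'' to ``exactly a quarter-ball for $V$ small''. Because $F_\infty$ meets each coordinate face in the relative interior of that face, for $n$ large the free boundary $\partial E_n\cap Q_d$ stays uniformly away from all lower-dimensional faces, so near the corner one is in the classical setting of a perimeter minimizer with free boundary on a finite union of mutually orthogonal hyperplanes. Gr\"uter-type free-boundary regularity (with the usual small singular set appearing only for $d\ge8$, excluded here by $C^1$-closeness to the smooth limiting cap) gives that $\partial E_n\cap Q_d$ is, uniformly in $n$, a $C^{1,\alpha}$ --- hence $C^{2,\alpha}$ --- hypersurface of constant mean curvature meeting the faces orthogonally and $C^1$-close to a spherical cap; reflecting across the $d$ coordinate hyperplanes produces a closed, embedded, constant-mean-curvature hypersurface in $\R^d$ (smooth across each hyperplane by the orthogonality and a Schwarz-type reflection), which by Alexandrov's theorem is a round sphere, so $E_n$ equals the intersection of $Q_d$ with a ball centered at a corner; one then sets $V_d$ equal to the largest volume for which this conclusion holds. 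The \emph{main obstacle} is this last step: it requires regularity estimates \emph{uniform} as $V\to0$ and able to handle the non-smooth corner part of $\partial Q_d$, which is exactly why the concentration result must be obtained first --- to confine the free boundary to the relative interiors of the faces, where the obstacle is a smooth hyperplane and standard free-boundary regularity applies --- before the Alexandrov reflection argument can be carried out in the orthant.
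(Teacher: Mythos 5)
Your argument is a direct reconstruction of the Morgan--Johnson proof adapted to the cube, whereas the paper takes a much shorter route that never touches the cube's corners directly. The paper observes that $Q_d$ is a fundamental domain for the flat torus $\R^d/(2\Z)^d$ under the group generated by reflections in the coordinate hyperplanes; a standard symmetrization argument (Sec.~1.3 of Ros's survey \cite{Ro005}) shows that isoperimetric regions in this torus may be taken symmetric under all those reflections, so the relative isoperimetric problem in $Q_d$ at volume $V$ is \emph{equivalent} to the unconstrained isoperimetric problem in the torus at volume $2^dV$. The torus is a smooth, compact, flat manifold without boundary, so the cited Morgan--Johnson theorem applies as a black box: small-volume minimizers are geodesic (here Euclidean) balls, and the $2^d$-fold reflection symmetry forces the center to sit at the origin, i.e.\ at a corner of $Q_d$. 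That two-line reduction absorbs in one stroke everything you carry out by hand --- existence, concentration, blow-up at the tangent cone and classification of the limiting cone minimizer, and especially the uniform regularity and Alexandrov-reflection step near the non-smooth parts of $\partial Q_d$. The last item is also where your outline is thinnest: Gr\"uter's free-boundary regularity is stated for a single smooth supporting hypersurface, not for the intersection of $d$ mutually orthogonal ones, and making it uniform as $V\to 0$ near a corner requires additional work; the natural fix is again to reflect locally across the faces and invoke interior regularity, which is exactly what the paper's global reduction already accomplishes. Your route buys a self-contained account and, in principle, explicit control of $V_d$, but at the cost of reproving rather than citing the deep part of the theorem.
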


For $d= 2$, a computation comparing discs with rectangles shows that $V_2 = 1/\pi$. For $d = 3$, it is conjectured that minimizers of perimeter are balls centered at corners, cylinders centered at an edge and cuts of the cube by a halfplane. If this conjecture is true, $V_3= \pi/ 3^4$. For larger $d$, the problem is even more difficult and the proof of the theorem relies on a compactness argument.

\begin{proof}
In the flat, $d$-dimensional torus ($[-1,1]^d$ with periodic boundary conditions), a standard reflection argument, see for instance Sec. 1.3 of the review \cite{Ro005}, shows that the minimizers of perimeter are symmetric under a reflection of each axis. 
 Therefore, the problem of minimizing the perimeter enclosing a volume $V$ in $[0,1]^d$ is equivalent to minimizing the perimeter of a set of volume $2^d V$ in the torus. This is a smooth, compact manifold without curvature so the result follows from \cite[Theorem 4.4, case b]{MoJo000}.
\end{proof}

Assuming the isoperimetric result, we can follow the strategy of Talenti \cite[sec.\,1.5]{Tal994} to obtain the rearrangement inequality for gradients.

\begin{proof}[Proof of Lemma~\ref{lem:rearrangements}] This proof follows exactly the same steps as the proof of the rearrangement inequality in $\R^d$ from \cite[sec.\,1.5]{Tal994}. It is reproduced here for the sake of completeness.
For shortness, we write $\kappa_d  = \omega_d / 2^d$, the volume of a unit ball centered at the origin intersected with $Q_d$. The Lebesgue measure of a set $E \subset Q_d$ will be denoted by $\mu(E)$ and $\mathcal{H}^{d-1}(E)$ will be the $d-1$-dimensional Hausdorff measure. With these definitions, the isoperimetric inequality for $E \subset Q_d$ with $\mu(E) \le V_d \le \kappa_d$ becomes
\[
\mathcal{H}^{n-1}(\partial E) \ge d \kappa_d^{1/d} \left( \mu(E)\right)^{1-1/d}.
\]
Now fix $u \in H^1(Q_d)$ nonnegative such that $\abs{\supp u} \le V_d$. We write the rearrangement $u^*$ in the form
\[
u^*(x) = v(\kappa_d \abs{x}^d).
\]
The function $v$ is non-increasing, continuous and maps $[0, \abs{\supp ( u)}  ]$ onto $[0, \operatorname{ess\,sup} (u)]$. 
As a first step we use spherical coordinates and a change of variables (relating radius $r$ to volume fractions $s$) to write
\begin{align}
\int_{Q_d} \abs{\nabla u^*}^2 = \int_{0}^1 d \kappa_d r^{d-1} \left(\frac{d}{d r} u^*(r) \right)^2\di r \nn \\
= \int_{0}^1 \di s \left( v'(s) d \kappa_d^{1/d} s^{1-1/d}\right)^2 \label{eq:calc-spherical}. 
\end{align}  
On the other hand, we use the fundamental theorem of calculus to express the right hand side of the inequality as a function of $s$,
\begin{align*}
 \int_{Q_d} \abs{\nabla u}^2 = \int_0^1 \di s \frac{d}{d s} \int_{\{x | v(s) < u(x)\}} \abs{\nabla u(x)}^2 \di x \\
=  \int_0^1 \di s \lim_{h \to 0} \frac{1}{h} \int_{\{x | v(s+h) < u(x) < v(s)\}} \abs{\nabla u(x)}^2 \di x.  
\end{align*}
From this point on, we may concetrate on those values of $s$ such that $v(s)$ is strictly decreasing in a neighborhood of $s$. The contribution to both integrals of the values of $s$ such that this is not the case vanishes. This also means that for small $h$ we don't have to distinguish $\mu \{ x |u(x) > v(s+h)\}$ and $\mu \{x | u(x) \ge v(s+h)\}$ etc.
By the Cauchy-Schwarz inequality,
\begin{align*}
\left(\int_{\{v(s+h) <u(x) < v(s)\}} \abs{\nabla u(x)} \di x \right)^2&\le \mu \{x| v(s+h) < u(x) <v(s)\}   \int_{\{v(s+h) \le u(x) \le v(s)\}} \abs{\nabla u}^2 \di x \\
&\le h  \int_{\{v(s+h) \le u(x) \le v(s)\}} \abs{\nabla u}^2 \di x , 
\end{align*}
where the last line follows from the definition of $v$. We obtain
\begin{align} \label{eq:iso-a}
 \frac{d}{d s} \int_{\{x | v(s) < u(x)\}} \abs{\nabla u(x)}^2 \di x \ge 
 \left( \lim_{h \to 0 } \frac{1}{h}\int_{\{v(s+h) \le u(x) \le v(s)\}} \abs{\nabla u} \di x \right)^2. 
\end{align}
In order to estimate the integrand, we use the coarea formula to write
\begin{align*}
\int_{\{v(s+h) \le u(x) \le v(s)\}} \abs{\nabla u} \di x& = \int_{v(s+h)}^{v(s)}  \mathcal{H}^{n-1}(\{x|u(x) = t\}) \di t \\
& \ge \int_{v(s+h)}^{v(s)}  d \, \kappa_d^{1/d } (\mu\{x| u(x) > t\})^{1-1/d} \di t.
\end{align*}
Here, the last line is where the isoperimetric inequality comes into play, to compare the area ($n-1$-dimensional Hausdorff measure) of the boundary of the set $\{x | u(x) > t\}$ to its volume.
By definition of the rearrangement, for $t \le v(s)$,
\[
\mu\{x| u(x) > t\} \ge \mu\{x| u(x) > v(s)\} = s.
\]
Therefore, we obtain the estimate
\begin{align*}
\lim_{h \to 0 } \frac{1}{h}\int_{\{v(s+h) \le u(x) \le v(s)\}} \abs{\nabla u} \di x   
&\ge d \, \kappa_d^{1/d } s^{1-1/d} \lim_{h \to 0} \frac{v(s)-v(s+h)}{h} \\
&=d \, \kappa_d^{1/d } s^{1-1/d} (-v'(s)).
\end{align*}
Recall that $v$ is decreasing. Inserting this in the inequality \eqref{eq:iso-a} and comparing with
\eqref{eq:calc-spherical} gives the result.
\end{proof}

\bigskip
\bigskip

\section*{Acknowledgments}
\thanks{The work of R.B. has been supported by Fondecyt (Chile) Projects \# 116--0856 and \#114-1155.  The work of C.V. has been supported by a ``Beca Presidente de la 
Rep\'ublica" (Chile) fellowship and by a ``Beca Padre Hurtado" (PUC) fellowship. The work of H. VDB. has been partially supported by  CONICYT (Chile)  (PCI) project REDI170157 and partially by
Millennium Nucleus ``Center for Analysis of PDE'' NC130017.

%\bibliographystyle{plain}
%\bibliography{refsbt}

\end{document}